\newtheorem{definition}{Definition}
\newtheorem{proposition}{Proposition}
\newtheorem{lemma}{Lemma}
\newtheorem{remark}{Remark}
\newcommand{\bs}[1]{\mathbf{#1}}
\newcommand{\bb}[1]{\mathbb{#1}}
\newcommand{\mc}[1]{\mathcal{#1}}
\begin{document}
%
\title{A Disentangled Representation Learning Framework for Low-altitude Network Coverage Prediction}
%
%
%

\author{Xiaojie Li,
Zhijie Cai,
Nan Qi,~\IEEEmembership{Senior Member,~IEEE,}
Chao Dong,~\IEEEmembership{Senior Member,~IEEE,}

Guangxu Zhu,~\IEEEmembership{Member,~IEEE,}
Haixia Ma,
Qihui Wu,~\IEEEmembership{Fellow,~IEEE,}
and Shi Jin,~\IEEEmembership{Fellow,~IEEE}
\thanks{The work of Guangxu Zhu was supported by the National Key R\&D Program of China under Grant 2022YFA1003900, Guangdong Major Project of Basic and Applied Basic Research under Grant 2023B0303000001, and Guangdong Young Talent Research Project under Grant 2023TQ07A708. The work of Nan Qi was supported in part by the National Natural Science Foundation of China (No. 62271253, 62471491), in part by the National Aerospace Science Foundation of China under Grant 2023Z021052002, and National Key Laboratory of Wireless Communications Foundation under Grant IFN202411 (Corresponding author: Guangxu Zhu, Nan Qi).}

\thanks{Xiaojie Li is with the School of Information Science and Engineering, Southeast University, Nanjing 210096,
China, also with the College of Physics, Nanjing University of Aeronautics and Astronautics, Nanjing 210016, China, and also with the Shenzhen Research Institute of Big Data, The Chinese University of Hong Kong-Shenzhen, Guangdong 518172, China (e-mail: xiaojieli@nuaa.edu.cn).

Nan Qi, Chao Dong and Qihui Wu are with the Key Laboratory of Dynamic Cognitive System of Electromagnetic
Spectrum Space, Ministry of Industry and Information Technology, Nanjing
University of Aeronautics and Astronautics, Nanjing 210023, China (e-mail: nanqi.commun@gmail.com, dch@nuaa.edu.cn, and wuqihui@nuaa.edu.cn).

Zhijie Cai and Guangxu Zhu are with the Shenzhen Research Institute of Big Data, The Chinese University of Hong Kong-Shenzhen, Guangdong 518172, China (e-mail: zhijiecai@link.cuhk.edu.cn and gxzhu@sribd.cn).

Haixia Ma is with the College of Physics, Nanjing University of Aeronautics and Astronautics, Nanjing 210016, China (e-mail: mahaixia@nuaa.edu.cn).



 Shi Jin is with the School of Information Science and Engineering, Southeast University, Nanjing 210096, China (e-mail: jinshi@seu.edu.cn).
}}

%
%

\markboth{IEEE Transactions on Mobile Computing}%
{Shell \MakeLowercase{\textit{et al.}}: Bare Demo of IEEEtran.cls for IEEE Journals}

\maketitle

\begin{abstract}
The expansion of the low-altitude economy has underscored the significance of Low-Altitude Network Coverage (LANC) prediction for designing aerial corridors. While accurate LANC forecasting hinges on the antenna beam patterns of Base Stations (BSs), these patterns are typically proprietary and not readily accessible. Operational parameters of BSs, which inherently contain beam information, offer an opportunity for data-driven low-altitude coverage prediction. However, collecting extensive low-altitude road test data is cost-prohibitive, often yielding only sparse samples per BS. This scarcity results in two primary challenges: imbalanced feature sampling due to limited variability in high-dimensional operational parameters against the backdrop of substantial changes in low-dimensional sampling locations, and diminished generalizability stemming from insufficient data samples. To overcome these obstacles, we introduce a dual strategy comprising expert knowledge-based feature compression and disentangled representation learning. The former reduces feature space complexity by leveraging communications expertise, while the latter enhances model generalizability through the integration of propagation models and distinct subnetworks that capture and aggregate the semantic representations of latent features. 
{Experimental evaluation confirms the efficacy of our framework, yielding a $7\%$ reduction in error compared to the best baseline algorithm. Real-network validations further attest to its reliability, achieving practical prediction accuracy with MAE errors at the $5\ \mathrm{dB}$ level.}

\end{abstract}

\begin{IEEEkeywords}
Low-altitude Network, Disentangled Representation Learning, RSRP, Coverage Map Prediction, AI for Communication.
\end{IEEEkeywords}

\IEEEpeerreviewmaketitle

\section{Introduction}


\IEEEPARstart{T}{he} rapid development of the low-altitude economy, driven by emerging applications such as unmanned aerial vehicles (UAVs), low-altitude logistics, and aerial surveillance, has generated an urgent demand for reliable and accurate Low-altitude Network Coverage (LANC)\cite{Jiang2025survey}. Ensuring seamless connectivity in such scenarios is crucial for enabling efficient network management\cite{rihan2024unified}. As a result, the accurate modeling and prediction of LANC have become a prominent research focus.

Conventionally, existing 4G and 5G network coverage prediction research {has focused on the coverage prediction of terrestrial networks}. For instance, in certain typical scenarios\cite{huang2022artificial}, such as indoor or outdoor, rural or urban settings, empirical models \cite{Abhayawardhana_Wassell_Crosby_Sellars_Brown_2005} or stochastic geometric models\cite{Bian_Wang_Gao_You_Zhang_2021,wang2018survey,winner2007winner, Baum_Hansen_Galdo_Milojevic_Salo_Kyosti_2005} have been employed. However, these approaches lack the capability to describe specific environments in detail \cite{zeng2024tutorial}, thereby limiting accuracy. To address this, ray tracing\cite{zhang2024deterministic,yuan2024efficient,chen2021channel} has been utilized in conjunction with local map modeling to simulate coverage in specific environments. Nevertheless, the challenges of accurately modeling environments and the high computational complexity remain significant hurdles for ray tracing technology\cite{Zeng_Xu_2021,de2021convergent}. Moreover, advanced deep learning methods have also been applied to coverage prediction\cite{zhang2023rme,10682510,minovski2021throughput,li2024deep,al2024machine}. Most of these methods\cite{zhang2023rme,10682510} have achieved high prediction accuracy based on simulated datasets, and some works based on real-world data\cite{minovski2021throughput,li2024deep,al2024machine} have demonstrated the potential of this approach.

\begin{table}[t]
\caption{Difference between Terrestrial and Low-altitude Networks}
\label{New_table1}
\centering
\begin{tabular}{cccc}
\toprule
                      & Main Path & Beam Impact               & Coverage \\ \midrule
Low-altitude Networks  & Direct Path                & Big & 3D                \\
Terrestrial Networks & Multipath                & Small          & 2D              
\\ \bottomrule
\end{tabular}
\end{table}
With the burgeoning development of the low-altitude economy, researchers have redirected their attention to LANC prediction\cite{rihan2024unified}. {As shown in Table \ref{New_table1}, there are three main differences between terrestrial and low-altitude networks. Unlike} terrestrial networks, low-altitude networks will be planned in airspace with fewer buildings, leading to the predominance of received signals from direct paths. Consequently, the actual LANC is largely influenced by the multi-beam capabilities of base stations (BSs) and the collaborative effect of multiple BSs\cite{bernabe2024massive}. {This shifts the focus from conventional environmental scattering to base station antenna configurations and beamforming patterns. Moreover, many terrestrial models treat the vertical dimension as negligible, making them unsuitable for the three-dimensional nature of low-altitude networks. The traditional terrestrial solutions, which are primarily designed for ground-level environments, are inadequate for addressing these challenges. As a result, there is a clear need for novel approaches specifically tailored to the unique requirements of low-altitude networks.}

In response to this need, many recent studies have attempted to explore the prediction of LANC. However, most of these studies primarily concentrate on 3-dimensional (3D) coverage interpolation for individual regions, i.e., 3D radio map reconstruction. These approaches seek to enhance reconstruction accuracy by incorporating factors such as trajectory optimization\cite{wang2023sparse} and channel shadowing\cite{wang2024sparse}, relying on sparse sampling within the area of interest and lacking the capability for cross-regional generalization. Such cross-regional generalization is critical for practical network planning and deployment, as it enables reliable coverage prediction in previously unseen areas without the need for extensive on-site measurements. Additionally, some efforts have been made towards theoretical modeling of ground-to-air channels within low-altitude networks\cite{lyu2019network,armeniakos2022performance,maeng2024uav}. Nevertheless, a limited understanding of real-world BS models, i.e., the actual beam setting, has constrained their performance. In summary, there is still a notable gap in research that considers the actual multi-beam effects of BSs and possesses the capability for generalizing LANC prediction across various regions.

In reality, if we could access the multi-beam directionality diagrams of BS antennas\cite{zhang2023physics} in cellular networks, it would be feasible to achieve generalization across BSs and even across regions. However, the direct representation of BS antenna multi-beam directionality diagrams is proprietary information of the equipment manufacturers, making it difficult to access. Fortunately, BS configuration parameters implicitly contain beam-related information\cite{he2021intelligent}, including the Active Antenna Unit (AAU) type, number of antenna channels, coverage scenario, and factors influencing the beam direction, such as down tilt and azimuth. By effectively leveraging this information, it is possible to infer the impact of multi-beam gain from BSs.

Given the increasing availability of BS operational data and the high cost of large-scale measurements, leveraging learning-based methods for predicting LANC using sparse low-altitude coverage data becomes feasible. However, during road tests, UAVs primarily capture RSRP from the serving base station within a limited coverage area. Despite traveling several kilometers, the data collected often reflects only a small subset of base stations, typically no more than 30, with each contributing data from a specific region. This results in an imbalanced distribution of data, as high-dimensional base station parameters are sparsely sampled across the entire coverage space, while low-dimensional locations exhibit greater variation. Additionally, the high cost of data collection in early low-altitude economies further limits the compilation of extensive LANC data\cite{gao2023aoi}, restricting the generalization capability of learning models. Ultimately, feature imbalance and data scarcity pose significant challenges to developing models with robust accuracy and generalization.

As shown in Fig. \ref{fig:0}, we have designed a disentangled representation learning framework to overcome the proposed obstacles. On one hand, embedding expert knowledge, such as signal propagation models, can enhance model generalization with limited samples. By leveraging this expertise for feature compression, we obtain a more balanced and compressed feature space, which helps mitigate generalization challenges. On the other hand, disentangled representation learning\cite{wang2024disentangled}, which focuses on learning the latent representations of distinct independent factors, has demonstrated its effectiveness in improving model interpretability, controllability, robustness, and generalization across several fields, including computer vision, natural language processing, and data mining. Inspired by this approach, we categorize the compressed features based on different independent aspects \cite{s23104775} of signal propagation, which enables the dissection of various components affecting signal propagation, thereby enhancing prediction accuracy and generalization.

\begin{figure}
    \centering
    \includegraphics[width=0.8\linewidth]{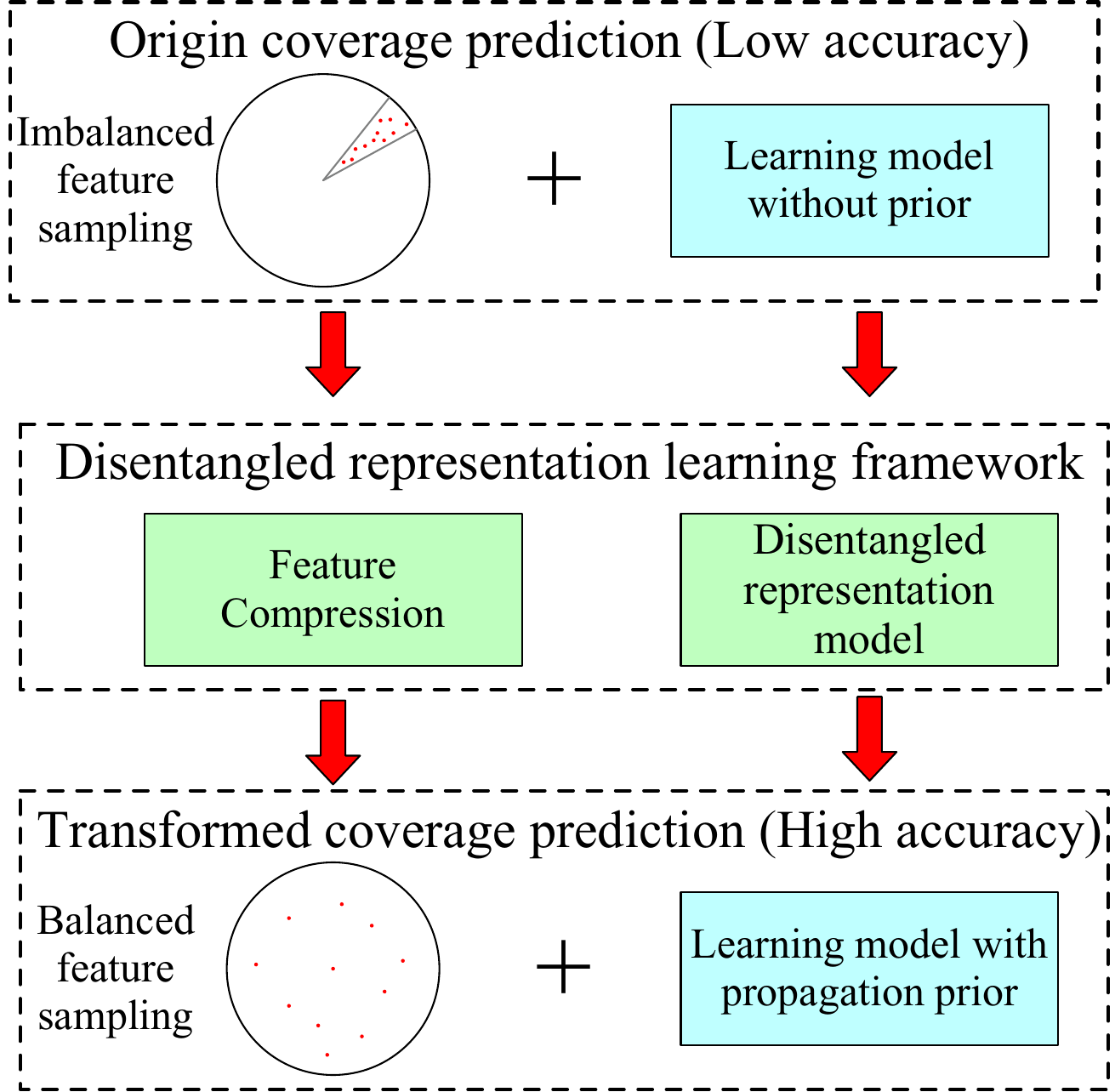}
    \caption{Disentangled representation learning framework.}
    \label{fig:0}
\end{figure}

In summary, we propose a disentangled representation learning framework for the prediction of LANC based on BS operational parameters. Our contributions can be summarized as follows:
\begin{itemize}
\item \textbf{Feature Compression Based on Expert Knowledge}: As part of the proposed framework, we leverage expert knowledge in the field of communications to compress features of BS operational parameters and sparse measurement data in the original feature space. This feature compression effectively addresses the issue of feature imbalance caused by imbalanced sampling across the feature space, providing a solid foundation for accurate and efficient LANC prediction.

\item \textbf{Model-guided Disentangled Representation Network}: Our framework incorporates a disentangled representation network guided by the signal propagation model, which enhances both generalizability and interpretability. The theoretical analysis demonstrates that the proposed network, while using fewer parameters, maintains comparable learning capabilities to traditional deep neural networks (DNNs) in LANC prediction tasks. This reduction in model complexity ensures improved generalization performance, particularly under limited data conditions.

\item \textbf{Experiments with Real-world Network Data}: We validated the proposed framework through real-world low-altitude data collection and network training conducted in Nanchang, Jiangxi Province, China. Ablation studies demonstrate our framework reduces the mean absolute error (MAE) by approximately $20\%$ compared to traditional DNNs. Furthermore, our method outperforms the best baseline algorithm with a $7\%$ reduction in error. In practical LANC prediction of an unknown area, the framework achieves a measured MAE error at the $5\ \mathrm{dB}$ level, demonstrating its reliability and applicability in real-world scenarios.

\end{itemize}
The remainder of this work is organized as follows: 
Section \ref{Section: System Model} introduces the system model and problem formulation. Following this, the proposed disentangled representation learning framework is detailed in Section \ref{Section:Framework}. Section \ref{Section: Theoretical Analysis} provides a theoretical analysis of the proposed framework. Empirical evaluations, including ablation studies, comparisons with existing algorithms on real-world datasets, and further performance validation, are presented in Section \ref{Section: Experiment}. Finally, Section \ref{Section: Conclusion} concludes the paper.

\section{System Model and Problem Formulation} \label{Section: System Model}

\subsection{BS and Its Operational Parameters}

In scenarios such as BS pre-planning or where low-altitude sparse data collection is absent, predicting low-altitude coverage based on operational parameters becomes a prevalent approach. {For such a BS of interest, the operational parameters crucially linked to coverage can be systematically categorised based on their roles in determining coverage. Given the diverse nature of BS parameters, which number in the hundreds across physical and network layers, it is essential to establish clear screening and categorisation criteria for a robust coverage prediction framework.}

{To address this challenge, we employ a systematic parameter selection approach guided by the principle of \textbf{coverage correlation}. Specifically, we retain only those parameters that directly influence the electromagnetic propagation characteristics, beam pattern, or signal strength distribution in the coverage area. This criterion enables us to exclude parameters irrelevant to physical coverage, such as core network latency, backhaul configuration, or higher-layer protocol settings, which do not affect the radio propagation environment. Through this screening process, we distil the extensive parameter space into a manageable set of coverage-critical parameters.}

{Following parameter selection, we establish a logical categorisation framework that organises the selected parameters according to their distinct roles in coverage determination. This categorisation yields four functional groups, each addressing a specific aspect of network coverage:}

\begin{itemize} 
\item \textbf{Absolute Location Attributes:} The parameters of \textit{longitude}, \textit{latitude}, and \textit{antenna height} are pivotal in pinpointing the absolute position of the BSs. These locational attributes provide a foundational understanding of the BS's placement relative to the coverage area.

\item \textbf{Static Characteristics of the Antenna Beam:} The \textit{AAU type}, \textit{number of channels}, \textit{coverage scenario}, and \textit{carrier frequency} are integral in defining the static characteristics of the antenna beam. Specifically, the AAU type dictates the availability of channels at the BS, while the configuration of coverage scenarios allows for the emission of signals aligned with the beam antenna patterns tailored to each scenario at a corresponding carrier frequency. These characteristics collectively delineate the static characteristics of the antenna beam.

\item \textbf{Orientation Characteristics of the Antenna Beam:} The orientation characteristics of the antenna beam are uniquely determined by the \textit{horizontal azimuth}, \textit{beam azimuth}, \textit{mechanical down tilt}, and \textit{digital down tilt}. Adjusting these angles enables precise control over the antenna's orientation, affecting coverage in both horizontal and vertical planes.

\item \textbf{Additive Signal Strength Features:} The \textit{total transmission power} and \textit{bandwidth} are categorized as additive signal strength features. Given a carrier frequency, the total bandwidth collectively establishes the effective utilization bandwidth for the antenna beams emitted by the BS. The antenna beams' transmission power is derived from the total transmission power.
\end{itemize}
For ease of description, the aforementioned categories and the included operational parameters are defined in Table \ref{table1}.

\begin{table}[t]
\centering
\caption{\centering{Category of Operational Parameters and Corresponding Mathematical Expression}}
\label{table1}
\begin{tabular}{cccc}
\toprule
Category           &        Vector       &Parameters&            Scalar                                       \\  \midrule
\multirow{3}{*}{\begin{tabular}[c]{@{}c@{}}Absolute Location\\  Attributes\end{tabular}}                    & \multirow{3}{*}{$\mathbf{x^{(L)}}$} & Longitude            & $x^{(L)}_1$  \\ && Latitude             & $x^{(L)}_2$\\&  & Antenna Height  & $x^{(L)}_3$ \\
\multirow{4}{*}{\begin{tabular}[c]{@{}c@{}}Static Characteristics\\  of the Antenna Beam\end{tabular}}      & \multirow{3}{*}{$\mathbf{x^{(S)}}$} &                     AAU Type & $x^{(S)}_1$                     \\ &    &     Num of Channels & $x^{(S)}_2$                     \\ &  &     Coverage Scenario& $x^{(S)}_3$          \\ &    &     Carrier Frequency & $x^{(S)}_4$           \\
\multirow{4}{*}{\begin{tabular}[c]{@{}c@{}}Orientation\\ Characteristics of\\ the Antenna Beam\end{tabular}} & \multirow{4}{*}{$\mathbf{x^{(O)}}$} &     Horizontal Azimuth    & $x^{(O)}_1$                     \\  &  &     Beam Azimuth & $x^{(O)}_2$                     \\ &   &                      Mechanical Down Tilt& $x^{(O)}_3$                     \\& &               Digital Down Tilt & $x^{(O)}_4$                     \\
\multirow{3}{*}{\begin{tabular}[c]{@{}c@{}}Additive Signal\\  Strength Features\end{tabular}}      & \multirow{2}{*}{$\mathbf{x^{(A)}}$} &                     Total Transmission Power & $x^{(A)}_1$                                          \\ &  &     Bandwidth& $x^{(A)}_2$  \\ \bottomrule                   
\end{tabular}
\end{table}

\subsection{Low-altitude Coverage for Area of Interest}

\begin{figure}
    \centering
    \includegraphics[width=0.9\linewidth]{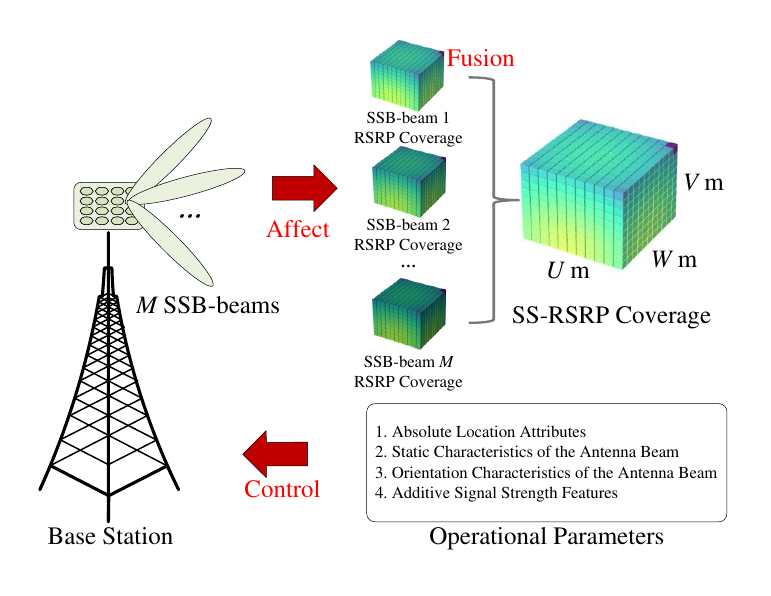}
    \caption{The process of low-altitude coverage prediction based on operational parameters.}
    \label{fig:1}
\end{figure}

To further explore the LANC provided by the BS, consider a nearby area with dimensions $U \times W \times V$ m\(^3\), corresponding to its longitudinal, latitudinal, and altitudinal extents. This area can be subdivided into $N$ cubes, where $N=N_U \times N_W \times N_V$ and $N_U$, $N_W$, and $N_V$ represent the counts of discretization along the longitude, latitude, and altitude axes, respectively. The set of cubes can thus be defined as $\mathbf{R}=\{1,\cdots,r_n,\cdots,r_N\}$. For cube $r_n$, the absolute position vector can be defined as $\mathbf{x^{(R)}_n}=[x^{(R)}_{n,1},x^{(R)}_{n,2},x^{(R)}_{n,3}]^T$, where $x^{(R)}_{n,1}$, $x^{(R)}_{n,2}$ and $x^{(R)}_{n,3}$ are longitude, latitude, and altitude, respectively.
Furthermore, the absolute position array for all cubes is defined as $\mathbf{X^{(R)}}=[\mathbf{x^{(R)}_1},\cdots,\mathbf{x^{(R)}_n},\cdots,\mathbf{x^{(R)}_N}] \in \mathbb{R}^{3\times N}$. {Next, we will further elaborate on signal strength measurements and introduce cube-level coverage.}

{In practical implementation, signal strength measurements are obtained through low-altitude drive testing procedures using dedicated measurement terminals provided by network equipment manufacturers. These terminals systematically collect geographical coordinates, monitored frequency bands, Physical Cell Identity (PCI) of serving base stations, and comprehensive beam-specific signal measurements.}

{To establish the theoretical foundation for low-altitude coverage analysis, it is essential to understand the beam structure in 5G NR networks. The current 5G protocol defines three primary downlink beam categories: (i) SSB beams (static beams utilized for initial terminal access and synchronization procedures), (ii) CSI-RS beams (dynamic beams employed for channel state information estimation), and (iii) PDSCH beams (dynamic beams dedicated to data transmission). For low-altitude network coverage prediction, terminal connectivity to target base stations is predominantly determined by SSB beam quality, as these beams establish the fundamental synchronisation and access framework required for initial network attachment.}

{SSB-RSRP (Synchronisation Signal Block - Reference Signal Received Power) quantifies the signal strength of individual SSB beams transmitted by a base station. Within each base station's operational framework, multiple SSB beams are transmitted using time-division multiplexing across different spatial directions to provide comprehensive coverage. SS-RSRP represents the maximum SSB-RSRP value among all transmitted SSB beams from a specific base station, effectively indicating the strongest achievable signal strength for terminal connectivity at any given spatial location.}

Determined by the AAU type, the BS will emit $M$ beams. For any given cube $n$, $M$ SSB-RSRP measurements can be obtained, along with a maximum SSB-RSRP, denoted as SS-RSRP. {These measurements are mathematically represented as $\mathbf{p_{n}}=[p_{n,0},p_{n,1},\cdots,p_{n,m},\cdots,p_{n,M}]^T \in \mathbb{R}^{(M+1)\times 1}$, where $p_{n,0}$ represents the SS-RSRP for the $n$-th cube, and $p_{n,m}$ ($m=1,2,\ldots,M$) denotes the $m$-th individual SSB-RSRP measurement for the $n$-th cube. Consequently, the comprehensive coverage characterisation for all cubes within the target area is expressed as $\mathbf{P}=[\mathbf{p_{1}},\cdots,\mathbf{p_{n}},\cdots,\mathbf{p_{N}}]\in \mathbb{R}^{(M+1)\times N}$.}
\subsection{Problem Formulation}
As shown in Fig. \ref{fig:1}, the objective of the study is to identify a learning machine, denoted as $g$, capable of predicting multi-beam RSRP coverage at the cube level, based on given operational parameters. For this task, the absolute position features of the cubes and the BS should be transformed into relative features. Specifically, for any cube $r_n$, the relative position feature is defined as
\begin{equation}
   \Delta \mathbf{x^{(R)}_n} = \mathbf{x^{(R)}_n} - \mathbf{x^{(L)}}.
\end{equation}
Thus, at cube $r_n$, the prediction of multi-beam RSRP $\mathbf{p_n}$ can be denoted as\footnote{The prediction model in Eq. (2) does not account for the shadowing effects caused by the surrounding environment of the BS deployment. This omission is considered reasonable for low-altitude coverage predictions at heights of 50 meters or above. At such altitudes, the coverage performance is primarily influenced by beamforming and the attenuation of line-of-sight (LOS) links. The potential gains resulting from environmental factors can be discussed in future work.}
\begin{equation}
   \mathbf{\hat{p}_n}=g(\mathbf{x^{(S)}},\mathbf{x^{(O)}},\mathbf{x^{(A)}},\Delta \mathbf{x^{(R)}_n}).
\label{Prediction Task}
\end{equation}
For the whole area, the prediction of $\mathbf{P}$ can be denoted as $\mathbf{\hat{P}}=[\mathbf{\hat{p}_{1}},\cdots,\mathbf{\hat{p}_{n}},\cdots,\mathbf{\hat{p}_{N}}]\in \mathbb{R}^{(M+1)\times N}$.
Finally, the problem of coverage prediction can be  formulated as
\begin{equation}
    \min_{g} L(\mathbf{P},\mathbf{\hat{P}}),
    \label{Eq Problem}
\end{equation}
where $L$ denotes a loss function that quantifies the discrepancy between $\mathbf{\hat{P}}$ and $\mathbf{P}$. The primary aim of the learning machine $g$ is to minimize this loss, thereby enabling precise prediction of the RSRP coverage performance at each point within the area of interest, based on the operational parameters. 

To tackle Problem (\ref{Eq Problem}), an intuitive approach involves leveraging sparse measurement samples from additional BSs and employing conventional supervised learning frameworks to develop regression models, such as Random Forest. However, this approach faces significant challenges due to imbalanced feature sampling. Specifically, with just a 2° discretization for variables like the horizontal azimuth, beam azimuth, mechanical down tilt, and digital down tilt, the parameter search space expands to the magnitude of 
\(10^8\) (i.e., \(360/2 \times 360/2 \times 180/2 \times 180/2\) permutations). UAV-based low-altitude road measurements over a 3km×3km area, however, engage only a few dozen BSs (around 50 combinations of operational parameters). This poses significant challenges for the generalization capability of conventional approaches. To address these challenges, we propose a disentangled representation learning framework that enhances the prediction and generalization performance for Eq. (\ref{Eq Problem}) through feature compression based on expert knowledge and the design of a disentangled representation neural network.

\section{Disentangled Representation Learning Framework}\label{Section:Framework}
\subsection{Expert-knowledge-based Feature Compressing}
In Section \ref{Section: System Model}, we preliminarily categorized the sample features based on expert knowledge into \(\mathbf{x^{(S)}}, \mathbf{x^{(O)}}, \mathbf{x^{(A)}},\) and \(\Delta \mathbf{x^{(R)}_n}\). It is important to note that when considering \(\{\mathbf{x^{(S)}}, \mathbf{x^{(O)}}, \mathbf{x^{(A)}}, \Delta \mathbf{x^{(R)}_n}\}\) as a complete set of sample features, the variation among different samples is primarily concentrated in the three-dimensional \(\Delta \mathbf{x^{(R)}_n}\), while the ten-dimensional variation of \(\{\mathbf{x^{(S)}}, \mathbf{x^{(O)}}, \mathbf{x^{(A)}}\}\) is relatively minor. This directly leads to sample imbalance. In fact, only \(\mathbf{x^{(S)}}\) is solely related to the type of BS and exhibits less variation. \(\mathbf{x^{(O)}}\) can be combined with \(\Delta \mathbf{x^{(R)}_n}\) to calculate the relative angle of the target cube with respect to the antenna panel's facing direction. Specifically, the horizontal and vertical angles of cube \(r_n\) relative to the BS's true north direction are first calculated using \(\Delta \mathbf{x^{(R)}_n}\), i.e.,
\begin{equation}
     \theta_n^{(H)}=\arctan2(\Delta x^{(R)}_{n,1},\Delta x^{(R)}_{n,2}),\label{arctan2}
\end{equation} 
\begin{equation}
    \theta_n^{(V)}=\arctan(\Delta x^{(R)}_{n,3}/\sqrt{{\Delta x^{(R)}_{n,1}}^2+{\Delta x^{(R)}_{n,2}}^2}),
\end{equation}
where $ \theta_n^{(H)}$ and $\theta_n^{(V)}$ are the horizontal and vertical angles of cube \(r_n\) relative to the BS's true north direction, respectively. The elements $\Delta x^{(R)}_{n,1}$, $\Delta x^{(R)}_{n,2}$, and $\Delta x^{(R)}_{n,3}$ represent the first, second, and third components of $\Delta \mathbf{x^{(R)}_n}$, respectively. These components signify the differences in longitude, latitude, and altitude dimensions between cube $r_n$ and the considered BS. Eq. (\ref{arctan2}) returns the horizontal relative angle of \(r_n\) with respect to the true north direction of the considered BS, with a range of \([- \pi, \pi]\) (in radians).

It is noted that the angles of the elements within \(\mathbf{x^{(O)}}\) are relative to the BS's true north direction. Consequently, the relative angle of cube \(n\) with respect to the antenna panel's normal direction (denoted as $\Delta \theta_n^{(H)}$ for horizontal and $\Delta \theta_n^{(V)}$ for vertical) can be computed as follows:

\begin{equation}
     \Delta \theta_n^{(H)}=\theta_n^{(H)}-x^{(O)}_{1}-x^{(O)}_{2},
     \label{Eq_7}
\end{equation}
\begin{equation}
    \Delta \theta_n^{(V)}=\theta_n^{(V)}-x^{(O)}_{3}-x^{(O)}_{4}. \footnote{Theoretically, variations in the digital down tilt of a BS are not entirely equivalent to changes in the mechanical down tilt. For practical applications, it is acceptable to consider the total down tilt of a BS as the sum of the digital down tilt and the mechanical down tilt.}
\end{equation}

Since \(\Delta \theta_n^{(H)}\) and \(\Delta \theta_n^{(V)}\) already characterize the relative angular relationship between cube \(n\) and the antenna panel, the angular information contained in \(\Delta \mathbf{x^{(R)}_n}\) is redundant. Given the orthogonality of the polar coordinate system, it is only necessary to extract the distance feature from \(\Delta \mathbf{x^{(R)}_n}\), i.e.,
\begin{equation}
    D_n=\sqrt{{\Delta x^{(R)}_{n,1}}^2+{\Delta x^{(R)}_{n,2}}^2+{\Delta x^{(R)}_{n,3}}^2}.
    \label{Eq_8}
\end{equation}

Now, we have transform the imbalanced variation of \(\Delta \mathbf{x^{(R)}_n}\) and $\mathbf{x^{(O)}}$ into that of $\Delta \theta_n^{(H)}$, $\Delta \theta_n^{(V)}$ and $D_n$ along with $n$. Next, we will decouple the influence of \(\mathbf{x^{(A)}}\) based on the signal propagation model. Generally speaking, the received SSB-RSRP, \(p_{n,m}\), at location \(r_n\) for the \(m\)th beam can be described as
\begin{equation}
p_{n,m} = p^{(T)} + G^{(T)}_{n,m} + PL_{n,m} + G^{(R)}.    \label{Eq.9}
\end{equation}
Here, \(p^{(T)}\) represents the transmission power of the SSB signal, \(G^{(T)}_{n,m}\) denotes the directional gain of the BS for the \(m\)th beam at grid \(r_n\), and \(G^{(R)}\) is the gain of the receiver antenna. $PL_{n,m}$ is the path fading between cube $r_n$ and corresponding beam $m$.\footnote{In the context of  LANC, different beams may follow distinct paths, leading to varied path fading.} Given that subsequent experiments are conducted using the same single-antenna UAV, without loss of generality, we simplify the receiver antenna to an omnidirectional antenna.

Theoretically, \(p^{(T)}\) can be obtained from \(x^{(A)}_1-10\log_{10}x^{(A)}_2\). However, due to the actual SSB signal bandwidth utilization not being 100\% and its susceptibility to the BS's carrier frequency and bandwidth, represented by \(\sigma\)\footnote{In commercial 5G networks, the bandwidth utilisation parameter \(\sigma\) varies according to channel bandwidth and subcarrier spacing (SCS) configurations as specified in 3GPP standards. Based on practical deployments, \(\sigma\) typically ranges from approximately 0.79 to 0.98, with higher values achieved in wider bandwidth allocations. For instance, with 15 kHz SCS, \(\sigma\) ranges from 0.90 (5 MHz) to 0.972 (50 MHz), while 30 kHz SCS achieves values from 0.792 (5 MHz) to 0.983 (100 MHz). In our experimental validation, we consider network configurations where \(\sigma\) ranges from 0.954 to 0.983, representing high-efficiency operating conditions typical of contemporary 5G deployments. This practical grounding ensures our coverage prediction model accurately reflects real-world system performance.}, the actual \(p^{(T)}\) is given by
\begin{equation}
   p^{(T)}=x^{(A)}_1-10\log_{10}x^{(A)}_2-10\log_{10}(\sigma).
\label{Eq_10}
\end{equation}

Thus, the impact of $\mathbf{x^{(A)}}$ has been decoupled. Let $y_{n,m}$ denote $p_{n,m}-p^{(T)}$. Then set $\mathbf{y_{n}}=[y_{n,0},y_{n,1},\cdots,y_{n,m},\cdots,y_{n,M}]^T \in \mathbb{R}^{M\times 1}$. The prediction task (\ref{Prediction Task}) can be transformed into
\begin{equation}
   \mathbf{\hat{y}_n}=g(\mathbf{x^{(S)}},\Delta \theta_n^{(H)}, \Delta \theta_n^{(V)}, D_n).
\label{transformed Prediction task}
\end{equation}

By incorporating expert knowledge, the transformation of the prediction task has addressed the issue of imbalanced feature variation. Simultaneously, it significantly reduces the dimensionality of the input variables, thereby ensuring an enhancement in coverage prediction performance. {The specification of practical \(\sigma\) ranges further ensures that our model parameters reflect realistic system operating conditions, enhancing the reliability of coverage predictions for deployment planning.} Next, we will introduce the corresponding disentangled representation network design.
\subsection{Disentangled-network-based Feature Representation}
\begin{figure}
    \centering
    \includegraphics[width=1\linewidth]{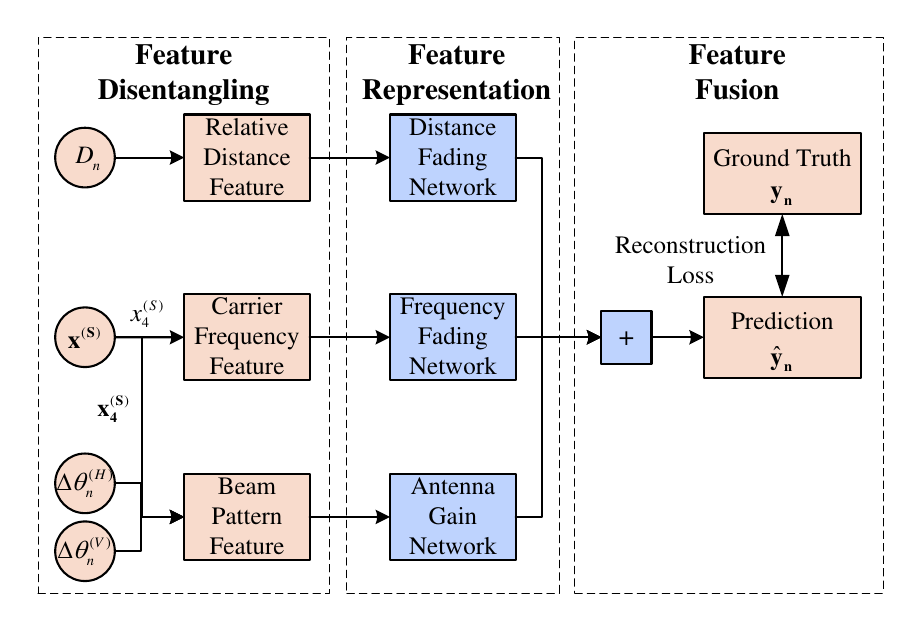}
    \caption{The design of our disentangled representation network, where distance fading, frequency fading, and antenna gain networks can be a simple MLP for feature representation, respectively.}
    \label{Fig Neural Network}
\end{figure}
\begin{figure}
    \centering
    \includegraphics[width=0.45\textwidth]{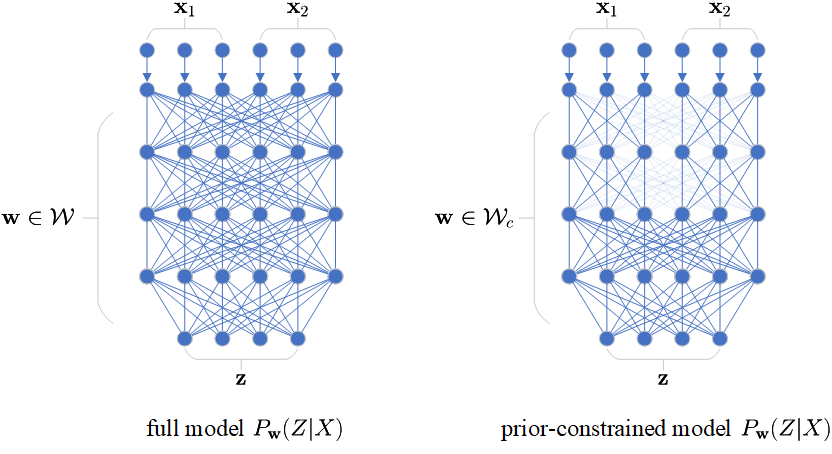}
    \caption{Full model and prior constrained model.}
    \label{fig:3}
\end{figure}
To predict LANC for a target BS, a neural network is initially trained using road test data and the sparse collections from other available BSs. This network is designed to execute the prediction task from input to output, as specified in Eq. (\ref{transformed Prediction task}). Firstly, we will introduce the core motivation behind our network design. According to Eq. (\ref{Eq.9}), the prediction target \(\mathbf{y_n}\) is primarily composed of a linear addition of two parts: \(PL_{n,m}\) and \(G^{(T)}_{n,m}\) (where \(G^{(R)}\) is considered as an omnidirectional antenna and can be omitted for simplicity). Drawing on literature \cite{9523765}, \(PL_{n,m}\) (considering the characteristics of low-altitude networks, shadowing is not taken into account here) can be further expressed as 
\begin{equation}
  PL_{n,m} = \alpha \log_{10} D_n + \beta \log_{10} x^{(S)}_4 .
\end{equation}
 However, the assumptions of this $\log$ model form may limit data fitting performance. Therefore, we adopt a relaxed expression for \(\mathbf{y_n}\), that is: 
\begin{equation}
    \mathbf{y_n} = f_1(\Delta \theta_n^{(H)}, \Delta \theta_n^{(V)}, \mathbf{x_4^{(S)}}) + f_2(D_n) + f_3(x^{(S)}_4).\label{Eq_13}
\end{equation}
Here, $\mathbf{x_4^{(S)}}=[x_1^{(S)},x_2^{(S)},x_3^{(S)}]^T$. \(f_1(\cdot)\), \(f_2(\cdot)\), and \(f_3(\cdot)\) are obtained through fitting with neural networks.

Overall, as illustrated in Fig. \ref{Fig Neural Network}, the design of the neural network comprises three main components: 1) Feature Disentangle, 2) Feature Representation, and 3) Feature Fusion. 
In the following, we will detail each of these components.

\subsubsection{Feature Disentangle} According to Eq. (\ref{Eq_13}), the inputs for the entire network are set to include \(\Delta \theta_n^{(H)}\), \(\Delta \theta_n^{(V)}\), \(\mathbf{x_4^{(S)}}\), \(D_n\), and \(x^{(S)}_4\), where \(\{\Delta \theta_n^{(H)}, \Delta \theta_n^{(V)}, \mathbf{x_4^{(S)}}\}\), \(D_n\), and \(x^{(S)}_4\) are fed into different neural networks for semantic extraction. 
\subsubsection{Feature Representation} Here, the relative distance, carrier frequency, and beam pattern features are respectively fed into the distance fading, frequency fading, and antenna gain networks to learn semantic representations. All networks utilize Multi-Layer Perceptron (MLP) architectures. 

Specifically, for the distance fading and frequency fading networks, due to the assumption made in Eq. (9) that, given $r_n$, signals from different beams exhibit different path fading, both networks output a semantic vector of dimension $M+1$. For the antenna gain network, given a BS, its $x_2^{(S)}$ and $x_3^{(S)}$ are fixed, but the sparse sampling positions affect $\Delta \theta_n^{(H)}$ and $\Delta \theta_n^{(V)}$. Therefore, the antenna gain network can learn the semantics of gain across different relative angles in the beam antenna pattern for a given station. Moreover, as the BS changes, $x_2^{(S)}$ and $x_3^{(S)}$ will also change, thereby facilitating the antenna gain network's learning of gain variations across different stations. Given that the antenna patterns vary across beams, the antenna gain network outputs a vector of dimension $M+1$.
\subsubsection{Feature Fusion}
According to Eq. (13), the outputs from different networks can be summed to obtain the predicted $\mathbf{\hat{y}_n}$, which is then compared with the actual sparse measurements $\mathbf{y_n}$ to compute the reconstruction loss. Without loss of generality, we employ MSELoss as the reconstruction loss, defined as 
\begin{equation}
    L(\mathbf{y_n}, \mathbf{\hat{y}_n})  =\frac{1}{M+1} || \mathbf{y_n} - \mathbf{\hat{y}_n})||_2^2,
\end{equation}
where $||\cdot||_2$ denotes the L2 norm of a vector. It is noteworthy that the outputs of the three networks are added directly. Consequently, the partial derivative of the loss with respect to each network's output is 1. This characteristic encourages each network to continually learn and adjust its weights to minimize the reconstruction loss.

\section{Theoretical Analysis of Disentangled Representation Learning Framework}\label{Section: Theoretical Analysis}
In this section, the proposed disentangled representation learning framework will be explained. Initially, evidence for the designed feature disentangling and representation processes will be presented.
\subsection{An Explanation on the Efficacy of Feature Disentangling and Representation}\label{Section IV-A}

It is noted that in Fig. \ref{Fig Neural Network}, $\{\Delta \theta_n^{(H)}, \Delta \theta_n^{(V)}, \mathbf{x_4^{(S)}}\}$, $\{D_n\}$, and $\{x^{(S)}_4\}$ are independent features and we use a single representation for them respectively. This setting can result in lower overfitting risk. We will next give a more general proof.

\subsubsection{Definitions}

\begin{definition}[Decomposable data]
    Consider we are given $K$ sets of signal vectors as a set of realizations of a corresponding independent random variable, e.g., $\{\bs{x}^k_i\}_{i \in [N]} \sim X^k, k \in [K]$. A dataset is formed by concatenating signal vectors from different sets, e.g., $\bs{x}_i = [\bs{x}_i^{1T}, \dots, \bs{x}_i^{KT}]^T$. 
\end{definition}

\begin{definition}[Full model]
    A full model $P_{\bs{w}}(Z|X)$ parameterized by $\bs{w}$ is used to code each $\bs{x}_i$ into some feature $\bs{z}_i$, where $\bs{w} \in \mc{W} \subset \bb{R}^d$, where $\mc{W}$ is the biggest subset allowed by the model structure, $d$ is the number of parameters contained in the model.
\end{definition}

\begin{definition}[Constrained model]
    A constrained model is a full model but with $\bs{w} \in \mc{W}_c \subseteq \mc{W}$, where $\mc{W}_c$ is a constrained subset allowed by the model structure.
\end{definition}

\begin{remark}
    Without abuse of notation, we denote by $\bb{R}$ the set of real numbers. For example, for a common MLP, we can regard that $\mc{W} = \bb{R}^d$. Since modern deep neural network models are large in the number of parameters, the model is often over-parameterized and biased to the seen training data, resulting in a phenomenon known as overfitting. A method to mitigate the effect is to narrow the search space based on some a priori knowledge. For example, by knowing that some of the parameters in the model are redundant, \cite{Paper_dropout} proposed DropOut to stochastically choose some dimensions of $\bs{w}$ and constrain the domain of that dimension to $\{0\}$.
\end{remark}\label{remark_1}

\begin{definition}[Learning capability]
Considering the generalization requirements of LANC prediction, we adopt the generalization error\cite{wang2021analyzing} as a metric to characterize the model's learning capability. Given a model parameter $\bs{w}$, a set of training data $\{\bs{x}_i\}$, and a loss function $\ell(\bs{x}, \bs{w})$, the learning capability is defined as \begin{align}
        \text{gen}(\bs{w}, \{\bs{x}_i\}) := \left|\bb{E}_{\bar{\bs{x}} \sim X}[\ell(\bar{\bs{x}}, \bs{w})] - \bb{E}_{\bs{x} \in \{\bs{x}_i\}}\left[\ell(\bs{x}, \bs{w})\right]\right|.
    \end{align}
\end{definition}

\begin{remark}
The learning capability characterizes the gap between the model's performance on training samples and that on test samples, unseen by the model but assumed to follow the same distribution. A model demonstrating strong generalization will exhibit reduced learning capability disparity.
\end{remark}
\subsubsection{Decomposing model for comparable learning capability}

We try to establish that our proposed prior-constrained model structure can maintain a comparable learning capability to traditional DNNs. From an angle of deep learning practitioners, knowing how $\bs{x}$ is 
independently composed of $[\bs{x}^{1T}, \dots, \bs{x}^{KT}]$, we propose to design dedicated \emph{model heads} for each of the components as shown in Fig. \ref{fig:3}. Then, to make use of the full data, we aggregate the output of the heads as the final feature by concatenation. This procedure is equivalent to a constrained model. To see this, take a toy example of the linear layer in a full model, transforming a vector of $d$ dimensions into one of $d_1$ dimensions. The weight of this layer can be viewed as a matrix $\bs{w} \in \bb{R}^{d \times d_1}$. To gain a corresponding constrained model, some dimension of $\bs{w}$ can be constrained to $\{0\}$.

\begin{lemma}[Learning capability bounded by mutual information \cite{xu2017information}]
    When the loss function $\ell$ is subgaussian, we have
    \begin{align}
        \text{gen}(\bs{w},X) \leq \sqrt{2 \sigma^2 I(\bs{w}; X)}
    \end{align}
    over the randomness of the stochastic learning algorithm, where $X$ denotes the training set as a random variable.
\end{lemma}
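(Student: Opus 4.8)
The plan is to derive the bound from the Donsker--Varadhan variational characterization of relative entropy together with the subgaussian hypothesis, following the information-theoretic decoupling argument of \cite{xu2017information}. The first step is to recast $\text{gen}(\bs{w}, X)$ as the gap between a joint expectation and a product-of-marginals expectation. Treating the parameter $\bs{w}$ as the random output of the stochastic learning algorithm and $X$ as the (random) training set, observe that the population term $\bb{E}_{\bar{\bs{x}} \sim X}[\ell(\bar{\bs{x}}, \bs{w})]$ is exactly the expected loss on a fresh sample drawn independently of $\bs{w}$, whereas the empirical term is the average of $\ell(\cdot, \bs{w})$ over the very samples that produced $\bs{w}$. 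Hence, defining $f(\bs{w}, X)$ to be $\ell(\cdot, \bs{w})$ averaged over the training set, the signed generalization gap $\Delta$ equals $\bb{E}_{P_{\bs{w}} \otimes P_X}[f] - \bb{E}_{P_{\bs{w}, X}}[f]$, i.e.\ the difference between evaluating $f$ under the product of the marginals and under the true joint law. This reformulation is the conceptual crux: it converts a statement about the train/test discrepancy into a statement about how far the joint law $P_{\bs{w}, X}$ departs from independence, which is precisely what $I(\bs{w}; X) = D\big(P_{\bs{w}, X} \,\big\|\, P_{\bs{w}} \otimes P_X\big)$ quantifies.

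Next I would invoke the Donsker--Varadhan inequality: for any measurable $f$ and any scalar $\lambda$,
\begin{equation}
\bb{E}_{P_{\bs{w}, X}}[\lambda f] \leq D\big(P_{\bs{w}, X} \,\big\|\, P_{\bs{w}} \otimes P_X\big) + \log \bb{E}_{P_{\bs{w}} \otimes P_X}\big[e^{\lambda f}\big].
\end{equation}
The relative entropy on the right is $I(\bs{w}; X)$ by the previous step. To control the log-moment-generating function I would apply the subgaussian assumption on $\ell$, which yields $\log \bb{E}_{P_{\bs{w}} \otimes P_X}\big[e^{\lambda (f - \bb{E}[f])}\big] \leq \lambda^2 \sigma^2 / 2$. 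Substituting and rearranging converts the inequality into a one-parameter quadratic bound: with $\Delta$ the generalization gap, one obtains $\lambda \Delta - \lambda^2 \sigma^2 / 2 \leq I(\bs{w}; X)$ for every $\lambda$.

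The final step is to optimize over $\lambda$. Since the left-hand side is maximized at $\lambda = \Delta / \sigma^2$ with value $\Delta^2 / (2\sigma^2)$, the inequality forces $\Delta^2 / (2\sigma^2) \leq I(\bs{w}; X)$, and taking square roots gives $\Delta \leq \sqrt{2\sigma^2 I(\bs{w}; X)}$. Running the same argument with $\lambda < 0$ (equivalently, applying it to $-f$) controls the opposite sign, so that the absolute value in $\text{gen}(\bs{w}, X)$ is bounded as claimed. The step I expect to demand the most care is the decoupling identification of the first paragraph, namely verifying that under the product measure the averaged-loss functional genuinely reproduces the population risk, so that $\Delta$ is exactly the generalization gap rather than a merely related quantity. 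The accompanying subgaussian bookkeeping is the other delicate point: one must track how the subgaussian parameter of a single loss propagates to the empirical average, which is exactly where any residual sample-size normalization would surface in the constant.
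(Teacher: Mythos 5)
Your proposal is correct and reproduces the standard Donsker--Varadhan decoupling argument of Xu and Raginsky, which is exactly the source the paper defers to: the lemma is stated here with only a citation to \cite{xu2017information} and no proof of its own. The one point you rightly flag --- how the subgaussian parameter of a single loss propagates to the empirical average, which in the original theorem introduces a $1/n$ factor inside the square root --- is also glossed over in the paper's statement, so your reconstruction matches the lemma as written.
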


\begin{remark}
    Subgaussianity is an easy qualification since a bounded function is always subgaussian.
\end{remark}

\begin{definition}[Equivalent indicator]
    $C$ is a binary random variable, takes value $1$ if a $\bs{w} \in \mc{W}_c$ else $0$.
\end{definition}

\begin{proposition}[Comparable learning capability of constrained model]\label{prop1}
    The constrained model exhibits an upper bound on learning capability that is not greater than its corresponding full model, reflecting a desirable reduction conducive to improved generalization.
\end{proposition}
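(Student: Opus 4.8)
The plan is to apply the mutual-information bound of the preceding Lemma to both the full and the constrained model, and then reduce the comparison of the two upper bounds to a single comparison of the underlying mutual-information terms. Since the loss $\ell$ is assumed subgaussian with the same parameter $\sigma^2$ in both cases, and since $t \mapsto \sqrt{2\sigma^2 t}$ is monotonically increasing on $[0,\infty)$, it suffices to show that the mutual information governing the constrained model is no larger than $I(\bs{w};X)$, the quantity governing the full model. In this way the entire proposition collapses onto one information-theoretic inequality.

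To establish that inequality I would exploit the equivalent indicator $C$, which by its definition is a deterministic function of $\bs{w}$ (it equals $1$ exactly when $\bs{w}\in\mc{W}_c$). Determinism gives $I(C;X\mid\bs{w})=0$, so augmenting $\bs{w}$ with $C$ adds no information: $I(\bs{w},C;X)=I(\bs{w};X)$. Expanding the same quantity by the chain rule in the opposite order yields
\begin{equation}
    I(\bs{w};X)=I(\bs{w},C;X)=I(C;X)+I(\bs{w};X\mid C).
\end{equation}
Because mutual information is nonnegative, $I(C;X)\ge 0$, and therefore $I(\bs{w};X\mid C)\le I(\bs{w};X)$. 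The term $I(\bs{w};X\mid C)$ is precisely the residual information the parameters carry about the data once the model is told which region of $\mc{W}$ it occupies, i.e. the quantity relevant to the constrained model operating in the event $C=1$.

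The step I expect to be the main obstacle is justifying that $I(\bs{w};X\mid C)$ is the correct characterization of the constrained model's generalization bound. Concretely, the Lemma must be invoked in a conditional form, bounding the constrained learning capability by $\sqrt{2\sigma^2\,I(\bs{w};X\mid C)}$ rather than by the marginal $\sqrt{2\sigma^2\,I(\bs{w};X)}$, and one must be careful about how the conditioning distributes over the branch $C=1$ to which the constrained model is actually confined, so that the averaged conditional mutual information does not understate the relevant per-branch quantity. Once this identification is secured, chaining the conditional Lemma with the inequality $I(\bs{w};X\mid C)\le I(\bs{w};X)$ and the monotonicity of the square root gives the constrained upper bound $\le$ the full upper bound, which is exactly the claimed reduction conducive to improved generalization.
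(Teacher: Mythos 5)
Your proposal is correct and follows essentially the same route as the paper's proof: both introduce the indicator $C$ as a deterministic function of $\bs{w}$ (equivalently, the Markov chain $X \leftrightarrow \bs{w} \leftrightarrow C$), expand $I(X,C;\bs{w})$ via the chain rule in both orders to obtain $I(X;\bs{w}) = I(X;C) + I(X;\bs{w}\mid C)$, and conclude by nonnegativity of mutual information. The caveat you raise about rigorously identifying $I(\bs{w};X\mid C)$ as the bound governing the constrained model is a fair point, and one the paper itself asserts without further justification.
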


\begin{proof}
    
We form a Markov chain $X \leftrightarrow \bs{w} \leftrightarrow C$. By Kolmogorov's identity, we have \begin{align}
    I(X, C; \bs{w}) &= I(X; \bs{w}) + \underbrace{I(X; C | \bs{w})}_{=0, \text{Markovian}} \\
    &= I(X; C) + I(X; \bs{w} | C).
\end{align}

Since mutual information $I$ is nonnegative, we will always have $I(X; \bs{w} | C) \leq I(X; \bs{w})$, i.e., the learning capability bound of the constrained model isn't more than that of the full model.
\end{proof}

\begin{proposition}[Lower overfitting risk of the constrained model]\label{prop2}
The constrained model demonstrates a lower risk of overfitting compared to its corresponding full model.
\end{proposition}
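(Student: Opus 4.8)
The plan is to treat this proposition as an information-theoretic corollary of Proposition~\ref{prop1}, reducing the informal notion of ``overfitting risk'' to a formal surrogate that the machinery built above already controls. Overfitting is precisely the phenomenon in which a model fits the particular training realization far better than an unseen sample from the same distribution; operationally this is the generalization gap $\text{gen}(\bs{w}, \{\bs{x}_i\})$ from the learning-capability definition. I would therefore first argue that a model whose parameters $\bs{w}$ depend heavily on the specific realization of the training set $X$ is the one prone to \emph{memorization}, and that this dependence is quantified exactly by the mutual information $I(\bs{w}; X)$: the larger $I(\bs{w}; X)$, the more the learned weights encode idiosyncrasies of the sample rather than the underlying distribution, and hence the larger the expected train/test discrepancy.

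With this surrogate fixed, the remaining steps are short. First, I would restate the constrained model through the equivalent-indicator event $C = 1$ (i.e., $\bs{w} \in \mc{W}_c$), so that its effective information leakage about the data is $I(X; \bs{w} \mid C)$ rather than $I(X; \bs{w})$. Second, I would invoke the inequality $I(X; \bs{w} \mid C) \leq I(X; \bs{w})$ established inside the proof of Proposition~\ref{prop1} via Kolmogorov's identity on the Markov chain $X \leftrightarrow \bs{w} \leftrightarrow C$. Third, feeding this into the subgaussian bound of the preceding Lemma gives $\sqrt{2\sigma^2 I(X; \bs{w} \mid C)} \leq \sqrt{2\sigma^2 I(X; \bs{w})}$, so the constrained model carries a uniformly smaller upper bound on both its memorization measure and its generalization gap. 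Since a smaller amount of sample-specific information together with a smaller generalization gap is exactly what we mean by reduced overfitting, the claim follows.

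The main obstacle I anticipate is not the inequality itself---that is inherited from Proposition~\ref{prop1}---but the justification that mutual information is a \textbf{faithful} proxy for overfitting risk, so that this proposition says something genuinely distinct from, and not a mere restatement of, the learning-capability result. To keep the argument honest I would make the surrogacy explicit by exploiting the structural inclusion $\mc{W}_c \subseteq \mc{W}$: every hypothesis realizable by the constrained model is realizable by the full model but not conversely, so the full model retains strictly more capacity to interpolate sample-specific noise, and the mutual-information reduction quantifies precisely this excess capacity. A secondary, more technical point is the subgaussianity precondition of the Lemma, which I would discharge by appealing to the boundedness of the RSRP regression loss in the present setting, as already noted in the remark following the Lemma.
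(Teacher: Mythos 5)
Your proposal is correct and reaches the stated conclusion, but it takes a noticeably different route from the paper. The paper's own proof is two sentences: it cites Proposition~\ref{prop1} for comparable learning capability and then appeals to the informal capacity heuristic that a constrained model has ``a reduced number of learnable parameters, which inherently limits its capacity to overfit noise or spurious patterns.'' You instead stay entirely inside the information-theoretic formalism: you identify overfitting risk with the generalization gap $\text{gen}(\bs{w},\{\bs{x}_i\})$, treat $I(\bs{w};X)$ as the formal measure of memorization, and push the inequality $I(X;\bs{w}\mid C)\leq I(X;\bs{w})$ through the subgaussian lemma to conclude that the constrained model's bound is uniformly smaller. What your approach buys is coherence: the proposition becomes a genuine corollary of the lemma and Proposition~\ref{prop1} rather than an appeal to an unformalized parameter-counting intuition, and you are right to flag (and address) the danger that it then collapses into a mere restatement of the learning-capability result --- your use of the inclusion $\mc{W}_c\subseteq\mc{W}$ to argue excess interpolation capacity is essentially the formalized version of the paper's heuristic. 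What the paper's route buys is an independent (if looser) justification via model complexity that does not depend on the mutual-information surrogate being a faithful proxy. One caveat applicable to both arguments: the inequality only yields a bound that is \emph{not greater}, so strictly speaking both you and the paper establish ``no higher overfitting risk'' rather than ``lower''; neither proof supplies strictness, so this is not a gap relative to the paper's own standard.
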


\begin{proof}
Based on Proposition \ref{prop1}, the constrained model achieves comparable learning capability. Moreover, the constrained model is characterized by a reduced number of learnable parameters, which inherently limits its capacity to overfit noise or spurious patterns in the training data. This reduction in model complexity effectively lowers the risk of overfitting.
\end{proof}

\begin{table*}[t]
\centering
\caption{\centering{Simple Description of Dataset}}
\label{tableII}
\begin{tabular}{ccccccccc}
\toprule
Area Index & Longitude & Latitude & Length (m) & Width (m) & Height (m)  & Sample Number & BS Number & Size                    \\ \midrule
1          & 115.69    & 28.61    & 4006       & 4273      & 150,300,500 & 43919               & 33                      & \multirow{3}{*}{9.31MB} \\
2          & 115.73    & 28.71    & 1501       & 1262      & 150,300,500 & 25322               & 50                      &                         \\
3          & 115.79    & 28.68    & 1862       & 1134      & 150,300,500 & 24598               & 59                      &                         \\ \bottomrule
\end{tabular}
\end{table*}
\begin{table}[t]
\centering
\caption{\centering{Benchmarks for Ablation Study}}
\begin{tabular}{cccc}
\toprule
Benchmark & Backbone & \begin{tabular}[c]{@{}c@{}}Compressed\\ Feature\end{tabular} & \begin{tabular}[c]{@{}c@{}}Disentangled\\ Network\end{tabular} \\ \midrule
1         & Proposed & $\checkmark$                                                   & $\checkmark$                                                   \\
2         & MLP      & $\checkmark$                                                   &  \textbackslash                                \\
3         & MLP      & \textbackslash                                 & \textbackslash                                 \\ \bottomrule
\end{tabular}
\label{tableIII}
\end{table}
\subsection{An Explanation on the Efficacy of Feature Fusion}
Based on the explanation in subsection \ref{Section IV-A}, we understand that under deterministic prior knowledge, it is possible to artificially set certain parts of the MLP to zero, thereby ensuring a lower overfitting risk. According to Eq. (\ref{Eq_13}), we know that the semantics represented by three separate parts can be summed to obtain the predicted RSRP. To embed this piece of knowledge into the model, we only need to change the feature fusion part in Fig. \ref{fig:3} from an MLP to an addition operation (setting specific parameters in the MLP to zero can achieve this addition). Therefore, by artificially setting parameters to zero once again, we modify the model following Eq. (\ref{Eq_13}), maintaining the comparable learning capability.

\section{Experimental Result}\label{Section: Experiment}

\subsection{Dataset}
To evaluate the performance of the proposed method, initial low-altitude road tests were conducted across three distinct areas within Nanchang City, Jiangxi Province. For these tests, a UAV equipped with a road test terminal flew at varying altitudes (150m, 300m, 500m). The terminal was capable of receiving multiple SSB-Beams' RSRP and PCI values from the downlink of the serving cell BSs. Utilizing the PCI values, the terminals identified and matched the nearest BS with the corresponding PCI within the area. The samples collected from each region during the tests are presented in Table \ref{tableII}. 


\subsection{Experiment Setting}
For the task of predicting coverage for BSs that have not been field-tested or are pending deployment, we have partitioned the entire dataset by binding each BS with its corresponding sparse samples. We then randomly divided the dataset centered around the BSs, as suggested in \cite{Eller_Svoboda_Rupp_2022}. This approach ensures physical isolation between different BSs, preventing data leakage due to the consideration of relative positional features as input.

In subsequent experiments, we define \(s\) as the BS sampling rate, which represents the proportion of the dataset allocated to the training set. The validation set is consistently set to occupy $10\%$ of the dataset, resulting in the test set comprising \(90\% - s\) of the dataset. Given the significant impact of dataset partitioning on experimental outcomes, we conduct 20 experiments for each training set proportion. These experiments maintain a fixed proportion but vary based on different random seeds, ranging from 0 to 19. To assess the performance of different algorithms, we employ the common coverage metric, i.e., MAE, on the test set.

\subsection{Ablation Study Result}

\subsubsection{Network Structure Comparison}
To validate the advanced nature of the proposed network structure, we established three benchmarks listed in Table \ref{tableIII} for ablation studies. In all benchmarks, the number of neurons in the hidden layers is set to 256, with the ReLU function serving as the activation function. For all benchmarks, the learning rate is set to 0.001, with a maximum of 100 epochs. Early stopping is implemented, terminating training when the loss on either the training or validation set decreases by no more than 1\% over ten consecutive epochs. For benchmark 1, the hidden layer counts for distance fading, frequency fading, and antenna gain networks are all 5. In contrast, benchmarks 2 and 3 feature 6 hidden layers. Compared to the benchmark proposed in this paper, benchmark 2 utilizes decoupled features as inputs and employs an MLP to perform predictions as depicted in Eq. (\ref{transformed Prediction task}). Conversely, benchmark 3 employs original features as inputs and uses an MLP to execute predictions according to Eq. (\ref{Prediction Task}).

\begin{figure}[t]
    \centering
    \includegraphics[width=0.8\linewidth]{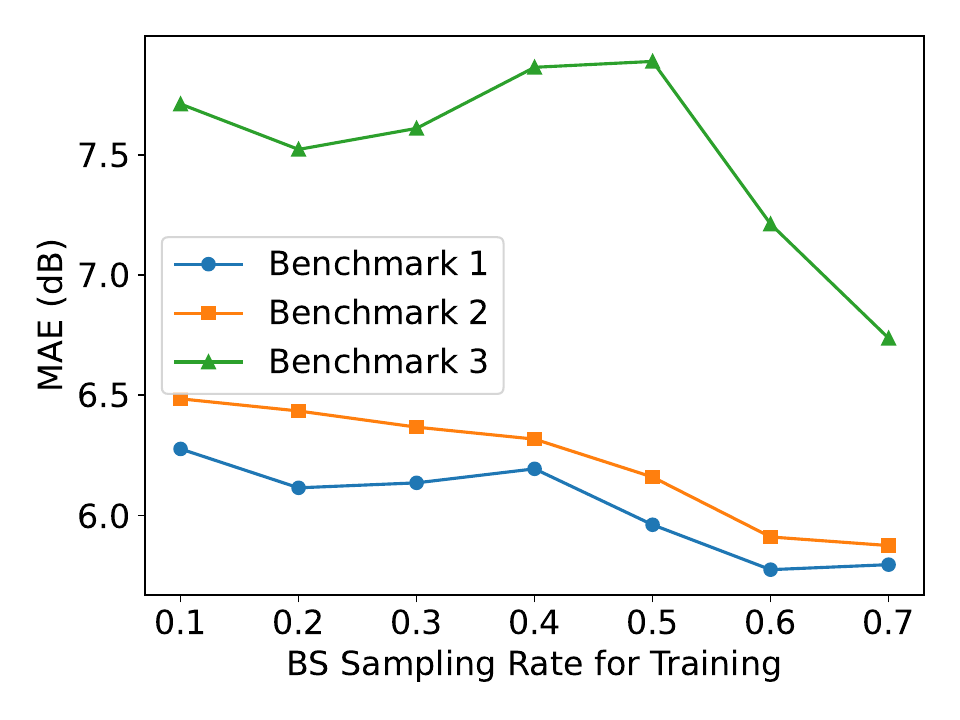}
    \caption{Average MAE for different algorithms across 20 random splits (random seeds 0-19) as a function of the BS sampling rate in ablation experiments.}
    \label{Fig_MAE_Ablation_Study}
\end{figure}

\begin{figure}[t]
    \centering
\includegraphics[width=0.8\linewidth]{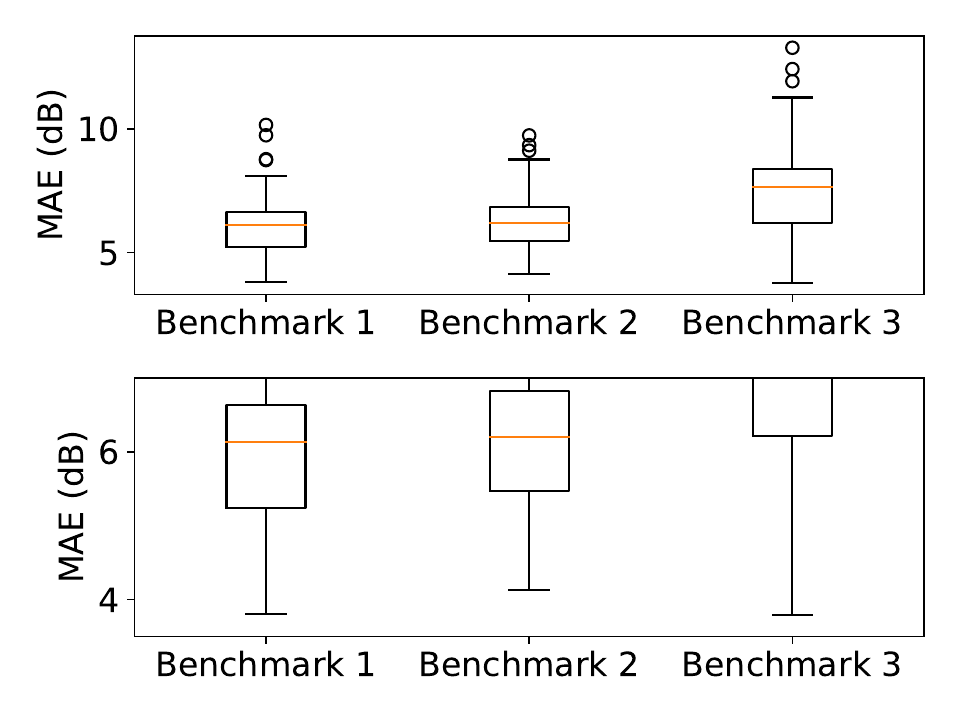}
    \caption{Boxplot of MAE performance for different algorithms across 20 random splits at sampling rates of \([10\%, 20\%, 30\%, 40\%, 50\%, 60\%, 70\%]\), totally 140 MAE data points per algorithm.}
    \label{Fig_Boxplot_Abalation_Study_1}
\end{figure}

\begin{figure}[t]
    \centering
    \includegraphics[width=0.8\linewidth]{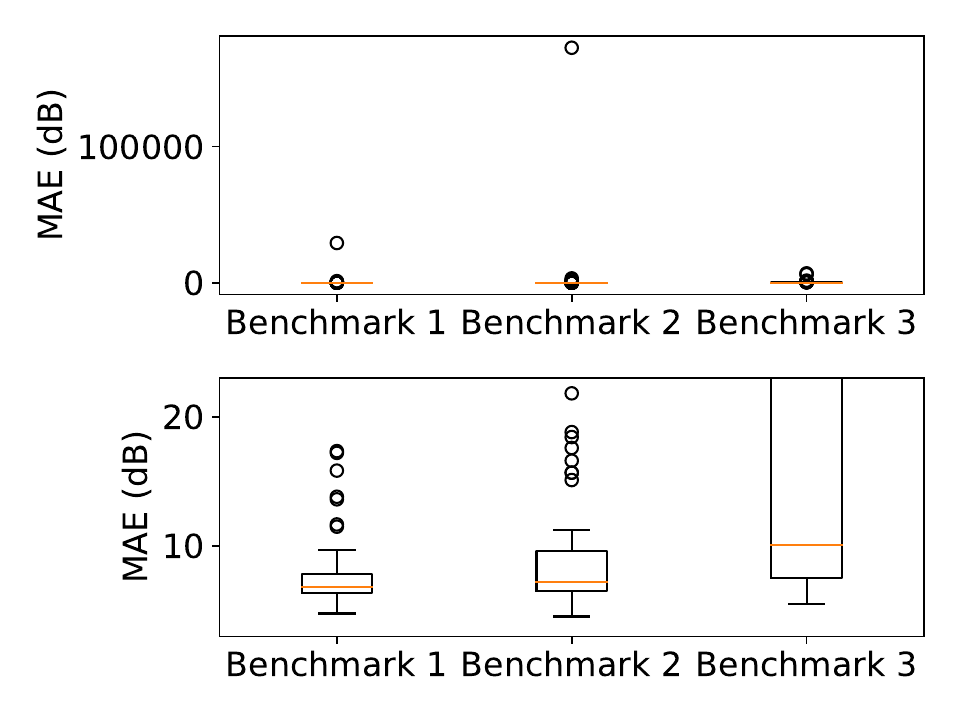}
    \caption{Boxplot of MAE performance for different algorithms across 20 random splits at sampling rates of \([1\%, 2\%, 3\%, 4\%, 5\%, 6\%]\), totally 120 MAE data points per algorithm.}
    \label{Fig_Boxplot_Abalation_Study_2}
\end{figure}

To demonstrate the effectiveness of the disentangled feature and network modules, we initially conducted experiments with 20 different dataset splits at sampling rates of \([10\%, 20\%, 30\%, 40\%, 50\%, 60\%, 70\%]\). For each sampling rate, we plotted the average MAE across benchmarks 1-3, as illustrated in Fig. \ref{Fig_MAE_Ablation_Study}. It is observable that the prediction errors for benchmarks 3, 2, and 1 progressively decrease, which validates the efficacy of the module design proposed in this study. Furthermore, as the sampling rate increases, the trends of the MAE curves for all three methods are consistently aligned, with overall performance gradually improving. This indicates that with an increase in training dataset size, the algorithms can learn a more precise distribution of the original dataset.

\subsubsection{Network Structure Robustness}
To demonstrate the robustness of the proposed algorithm, we visualized the boxplots of different algorithms during ablation experiments, as shown in Fig. \ref{Fig_Boxplot_Abalation_Study_1}. The different subplots within the figure visualize varying ranges of the Y-axis. Compared to benchmark 2, the proposed algorithm exhibits a lower mean and smaller fluctuation range. However, benchmark 2 achieves a lower maximum MAE, which may be attributed to the accidentality of outliers. The significantly higher MAE of benchmark 3 underscores the importance of feature compression design. Moreover, we conducted experiments with 20 random splits at low sampling rates of \([1\%, 2\%, 3\%, 4\%, 5\%, 6\%]\) to simulate early stages of low-altitude network planning with markedly insufficient training data, as well as scenarios where the algorithm is applied across different cities or operators with few matched AAU types between the test and training sets. As illustrated in Fig. \ref{Fig_Boxplot_Abalation_Study_2}, it is evident that even at such low training set proportions, the proposed method still achieves acceptable MAE performance (less than $7.5\ \mathrm{dB}$) on the test set for 75\% of the cases, significantly outperforming other algorithms. This further highlights the model's robustness in scenarios with low data availability.

\subsubsection{Importance of Right Model Knowledge}

\begin{table}[t]
\centering
\caption{Alternative Input for Different Networks in Proposed Method, where Networks 1,2 and 3 for Distance Fading, Frequency Fading, and Antenna Gain Networks, Respectively}

\begin{tabular}{cccc}
\\ \toprule

Model         & \begin{tabular}[c]{@{}c@{}}Network 1\end{tabular} & \begin{tabular}[c]{@{}c@{}}Network 2\end{tabular} & \begin{tabular}[c]{@{}c@{}}Network 3\end{tabular} \\ \midrule
Proposed      & \(D_n\)                                                                  & \(x^{(S)}_4\)                                                                   & \(\Delta \theta_n^{(H)}\), \(\Delta \theta_n^{(V)}\), \(\mathbf{x_4^{(S)}}\) \\
Wrong Model 1 & \(D_n\)                                                                  &\(\Delta \theta_n^{(H)}\)                                                                   &  \(x^{(S)}_4\),\(\Delta \theta_n^{(V)}\), \(\mathbf{x_4^{(S)}}\)                                                           \\
Wrong Model 2 & \(\Delta \theta_n^{(H)}\)                                                                  & \(x^{(S)}_4\)                                                                   & \(D_n\), \(\Delta \theta_n^{(V)}\), \(\mathbf{x_4^{(S)}}\)                                                           \\
Wrong Model 3 & \(\Delta \theta_n^{(H)}\)                                                                  & \(\Delta \theta_n^{(V)}\)                                                                   & \(D_n\), \(x^{(S)}_4\),  \(\mathbf{x_4^{(S)}}\)                         
\\ \bottomrule
\end{tabular}
\label{Table_True_Model_Study}
\end{table}
\begin{figure}[t]
    \centering
    \includegraphics[width=0.8\linewidth]{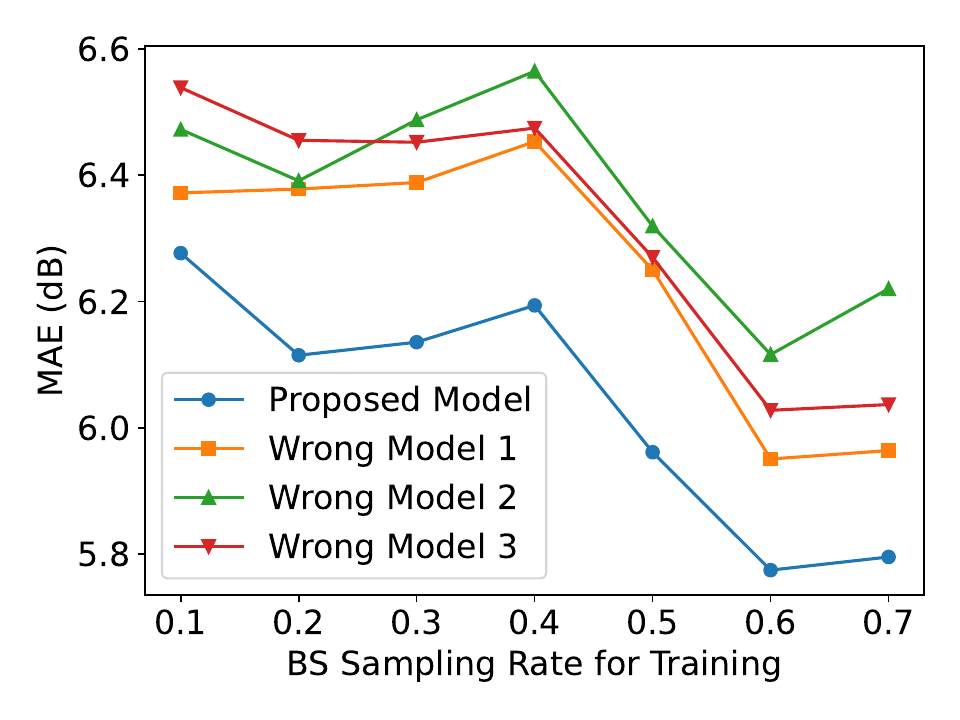}
    \caption{Average MAE for different model knowledge across 20 random splits (random seeds 0-19) as a function of the BS sampling rate.}
    \label{Fig_True_Model_Study}
\end{figure}
Based on the model described in Eq. (\ref{Eq_13}), we understand that the sets $\{\Delta \theta_n^{(H)}, \Delta \theta_n^{(V)}, \mathbf{x_4^{(S)}}\}$, $\{D_n\}$, and $\{x^{(S)}_4\}$ should be fed into distinct neural networks for learning representations. To investigate the impact of model correctness on learning performance, we considered alternative input combinations for the distance fading, frequency fading, and antenna gain networks, which are denoted as wrong models 1, 2, and 3, respectively, as shown in Table \ref{Table_True_Model_Study}. Furthermore, we have illustrated the MAE performance of these models in Fig. \ref{Fig_True_Model_Study}. It can be observed that the proposed model has an MAE error of less than other wrong models at all BS sampling rates. Compared with the best wrong model 1, the MAE of the proposed model decreases by about $0.2\ \mathrm{dB}$ on average. The observed superior performance of the proposed model highlights the importance of proper model knowledge guidance. Notably, wrong model 1 outperforms wrong models 2 and 3 slightly, which may be attributed to model 1 maintaining a separate network for the \(D_n\) input, assigning it a greater weight in impact on coverage prediction. 

It is important to note that not all possible combinations of model inputs were tested in this study. However, the experiments conducted sufficiently demonstrate the significance of correct knowledge in the considered model, Eq. (\ref{Eq_13}), for coverage prediction. It is worth mentioning that beyond the models of this discussion, the channel propagation model, Eq. (\ref{Eq_13}), may have more accurate prior knowledge structures, potentially leading to better feature preprocessing and combined representations. This could result in improved prediction performance, involving more precise channel propagation modeling, high-quality data-driven learning, and their trade-offs.

\subsubsection{Algorithm Generalization Capability}{Since we do not have infinite unknown data and cannot confirm the distribution of infinite data sets, the theoretical learning/generalization capability cannot be calculated, but we can verify the model's generalization capability experimentally. In the setting of the training set, validation set, and test set ratio of 6:2:2, 200 models were trained by randomly dividing the data set under 200 random seeds. We visualised the MAE CDF curves of these models in the training set, validation set, and test set.

According to the Fig. \ref{cdf}, the training set of all models is below 3.5 $\mathrm{dB}$, and the median is 2.733 $\mathrm{dB}$, indicating that the model has been basically fully trained. The CDF curves of validation MAE and test MAE are very close, and the median is 5.196 $\mathrm{dB}$ and 5.166 $\mathrm{dB}$, respectively, which are both low, indicating the reliability of the proposed method's Generalization capability. About 20\% of the models' test MAE is greater than 7 $\mathrm{dB}$, which may be caused by the large gap in data distribution between the training set and the test set. This problem can be improved by appropriately increasing the proportion of the verification set and training multiple models to vote.}
\begin{figure}[t]
    \centering
    \includegraphics[width=0.78\linewidth]{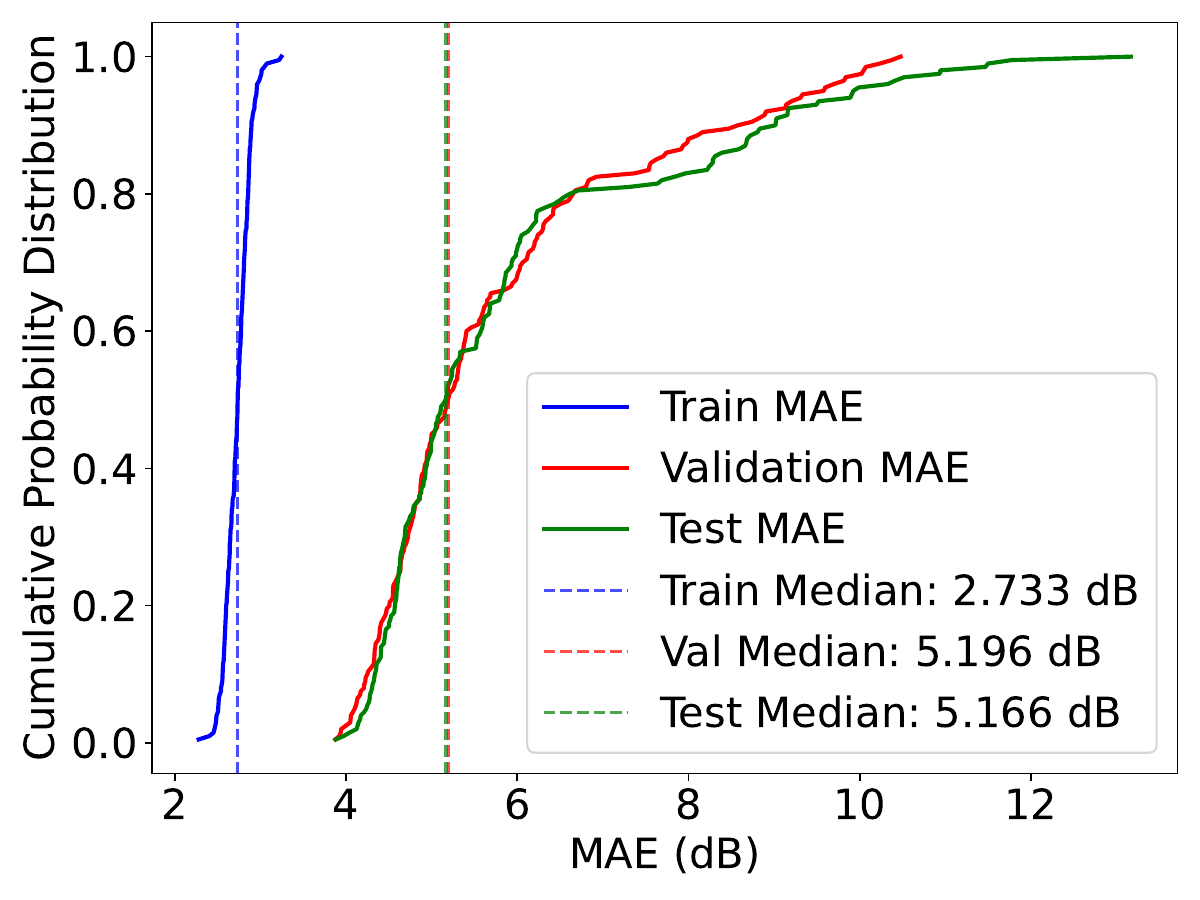}
    \caption{CDF curve of MAE metric for 200 trained neural networks across different dataset split seeds.}
    \label{cdf}
\end{figure}
\begin{figure}[t]
    \centering
    \includegraphics[width=0.8\linewidth]{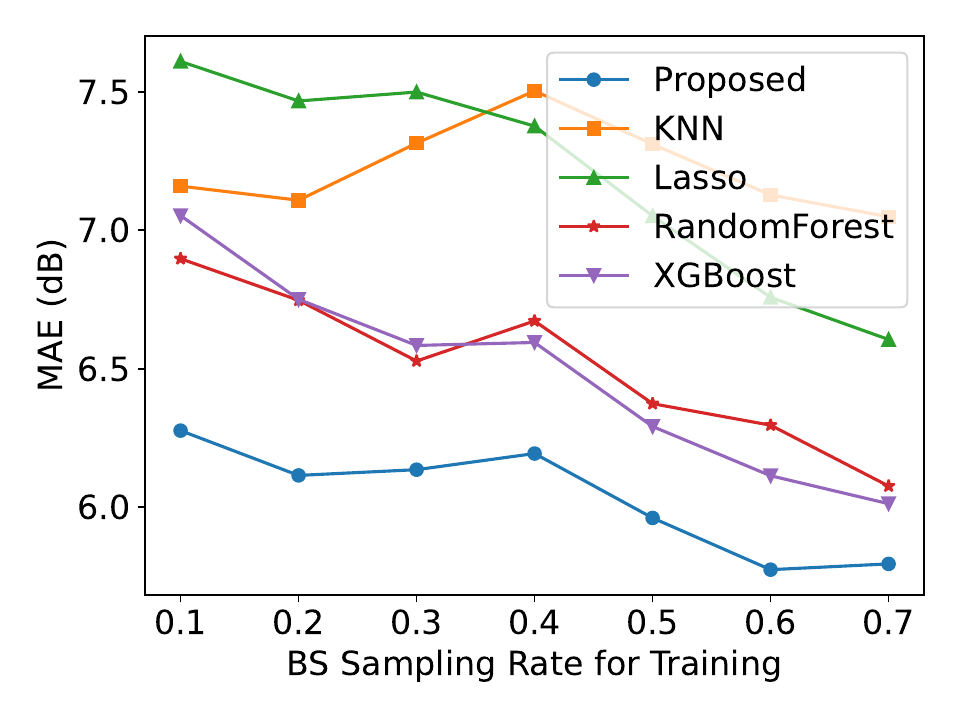}
    \caption{Average MAE for different algorithms across 20 random splits (random seeds 0-19) as a function of the BS sampling rate in ablation experiments.}
    \label{Fig_MAE_Alogrithm_Contrast}
\end{figure}

\begin{table*}[t]
\centering
\caption{Computational Complexity Comparison across Different Algorithms}
\label{tab:complexity}
\begin{tabular}{cccccc}
\toprule
Metric & Proposed & KNN & Lasso & RandomForest & XGBoost \\ 
\midrule
Model Size (MB) & 4.31 & 9.27 & 0.010 & 172.56 & 2.66 \\
Parameters & $\sim$1.1M & N/A\textsuperscript{*} & $\sim$2K & $\sim$45M & $\sim$680K \\
Inference Time (s)\textsuperscript{**} & $<$0.5 & 1.2 & 0.1 & 3.8 & 0.8 \\
Memory Usage (GB) & 0.8 & 2.1 & 0.05 & 4.2 & 1.3 \\
\bottomrule
\multicolumn{6}{l}{\footnotesize \textsuperscript{*}KNN is instance-based without explicit parameters} \\
\multicolumn{6}{l}{\footnotesize \textsuperscript{**}For 2,500 base station predictions on RTX 4090}
\end{tabular}
\end{table*}

\begin{table}[t]
\centering
\caption{\centering{BS Operational Parameters Setting}}
\begin{tabular}{ccc}
\toprule
Parameters         & BS 1        & BS 2         \\ \midrule
Antenna Height     & 39.86m      & 38.0m        \\
Coverage Scenario  & SCENARIO\_0 & SCENARIO\_21 \\
Num of Channels    & 32T32R      & 64T64R       \\
AAU Type           & AAU5336e    & AAU5636      \\
Mechanical down tilt    & 9.72°       & 0°           \\
Horizontal Azimuth            & 359°        & 240°         \\
Relative SSB Power & 0dB         & 2.01dB     \\ \bottomrule 
\end{tabular}
\label{Table_V}
\end{table}
\begin{figure}[t]
\centering
    \includegraphics[width=0.9\linewidth]{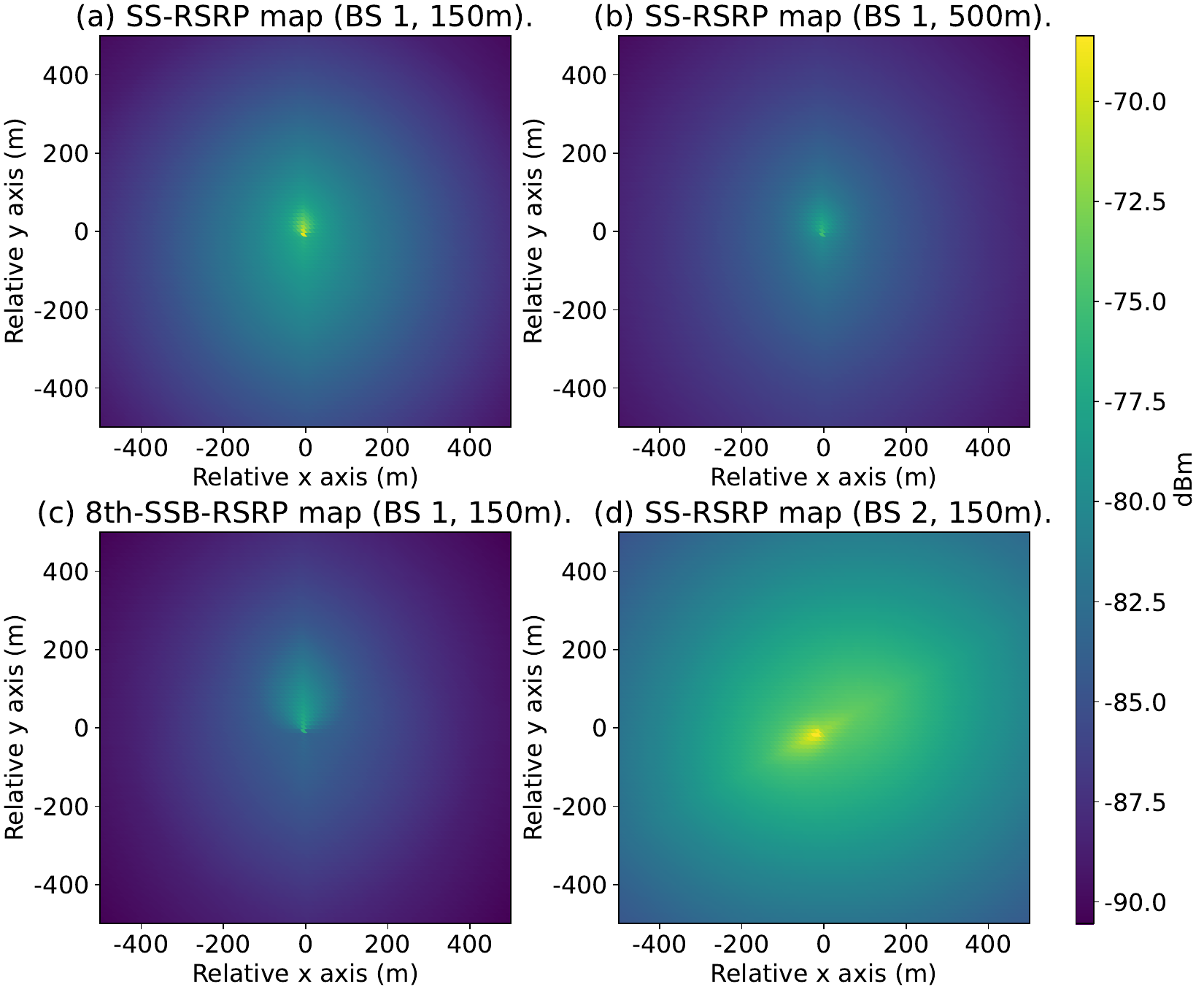}
\caption{The prediction coverage visualisation results for different operational parameters under different conditions.}
\label{fig_Single_BS}
\end{figure}
\begin{figure}[t]
\centering
\centering
\includegraphics[width=0.98\linewidth]{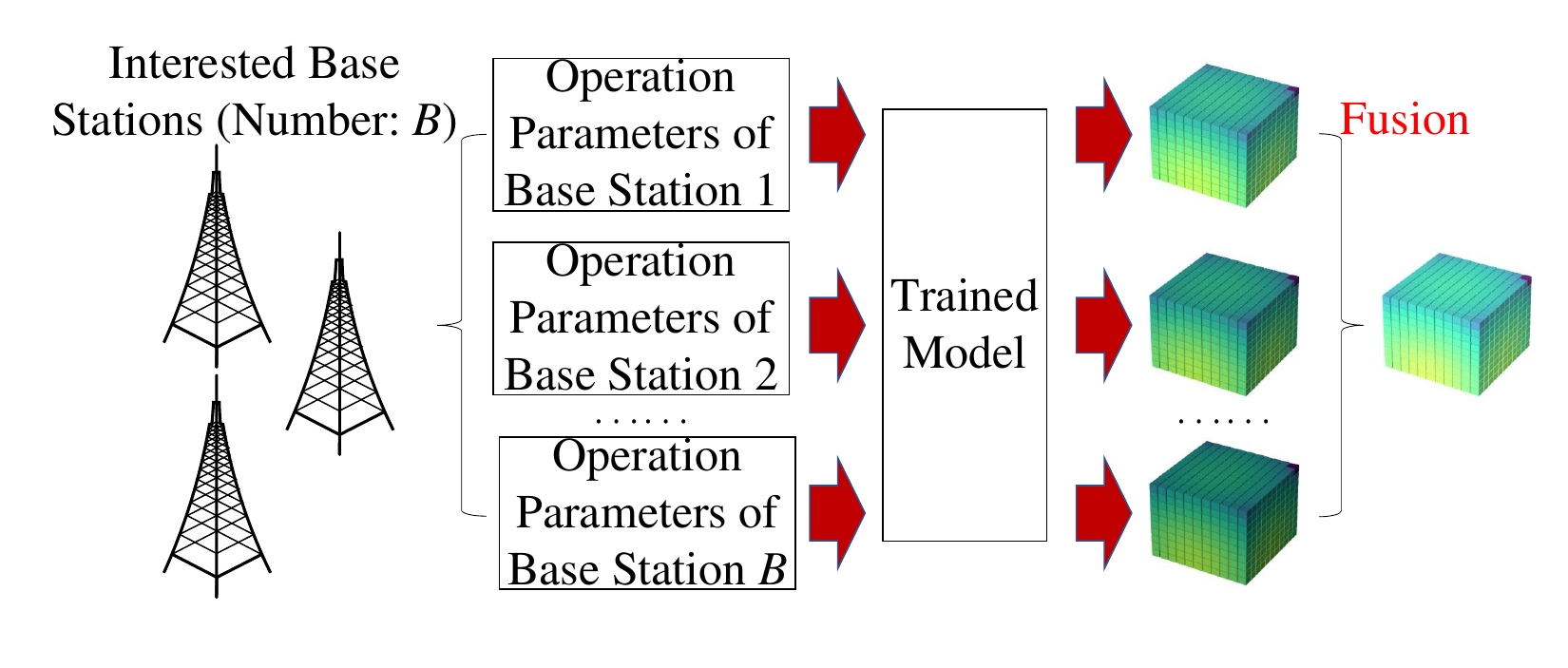}
\caption{Multi-BS coverage prediction framework. Operational parameters from $B$ interested base stations are input to the trained model to generate individual coverage predictions, which are then fused to produce the integrated coverage map.}
\label{New_fig1}
\end{figure}
\begin{figure}[t]
    \centering
    \includegraphics[width=0.8\linewidth]{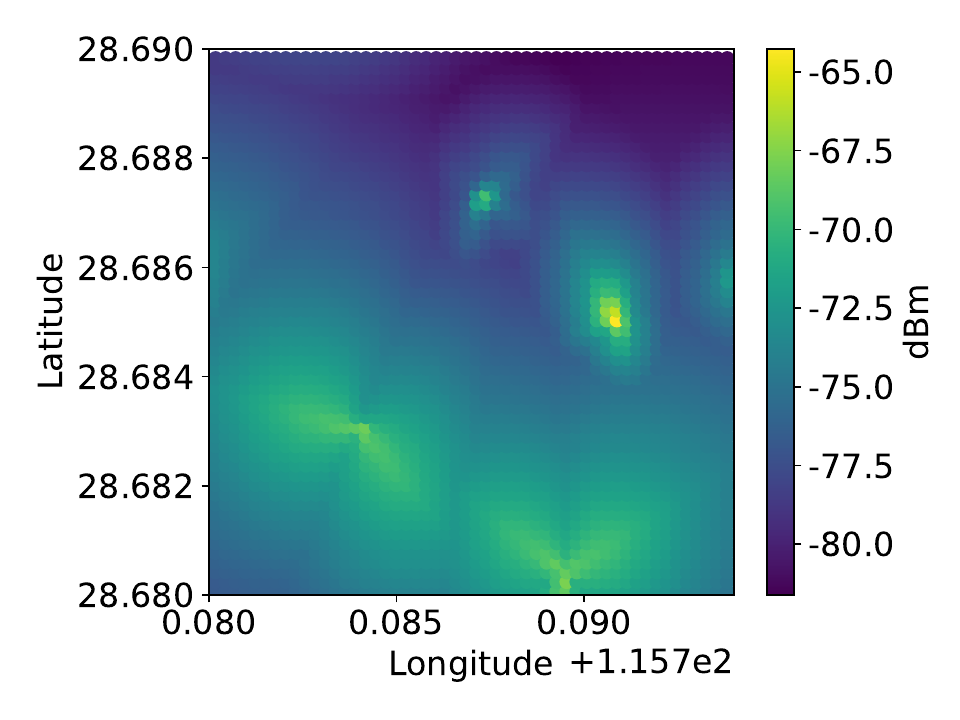}
    \caption{SS-RSRP coverage map visualisation in Xinyue Lake before operation parameters optimisation.}
    \label{Fig_Multi_BS_Coverage}
\end{figure}

\begin{figure*}[t]
    \centering
    \includegraphics[width=0.98\linewidth]{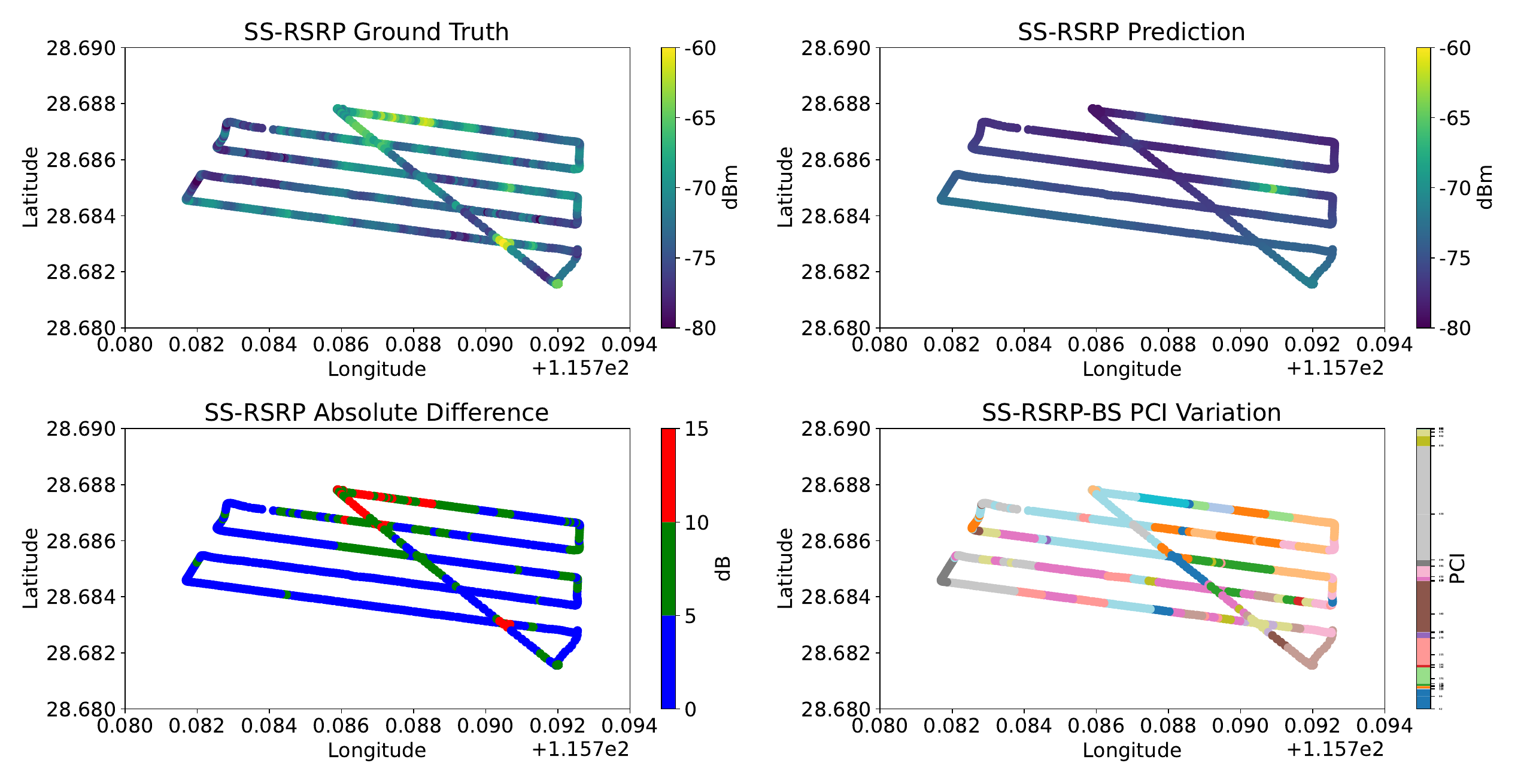}
    \caption{Comparison of predicted and actual coverage at sampling locations in Xinyue Lake before operational parameters optimisation (top left: actual SS-RSRP coverage at sampling locations, top right: predicted SS-RSRP coverage at sampling locations, bottom left: absolute difference between actual and predicted SS-RSRP at sampling locations, bottom right: UAV PCI switching during flight).}
    \label{Fig_Sampling_Comparison}
\end{figure*}

\begin{figure}[t]
    \centering
    \includegraphics[width=0.8\linewidth]{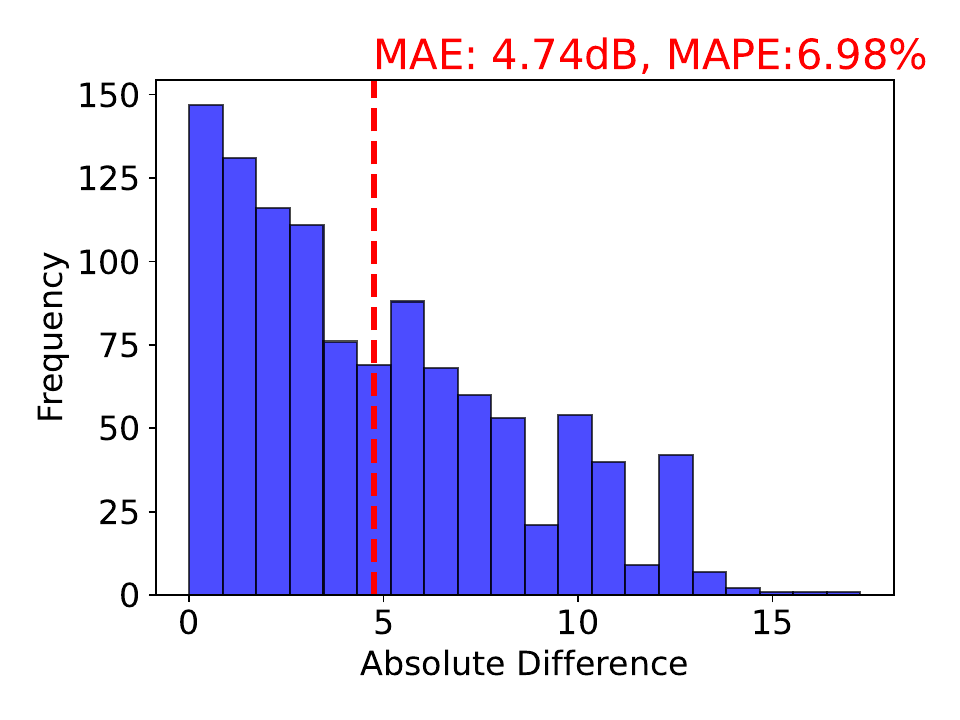}
    \caption{The distribution of truth-prediction absolute error for all samples in Xinyue Lake before operation parameters optimisation.}
    \label{Fig_MAE_Distribution_1}
\end{figure}

\begin{figure}[t]
    \centering
    \includegraphics[width=0.8\linewidth]{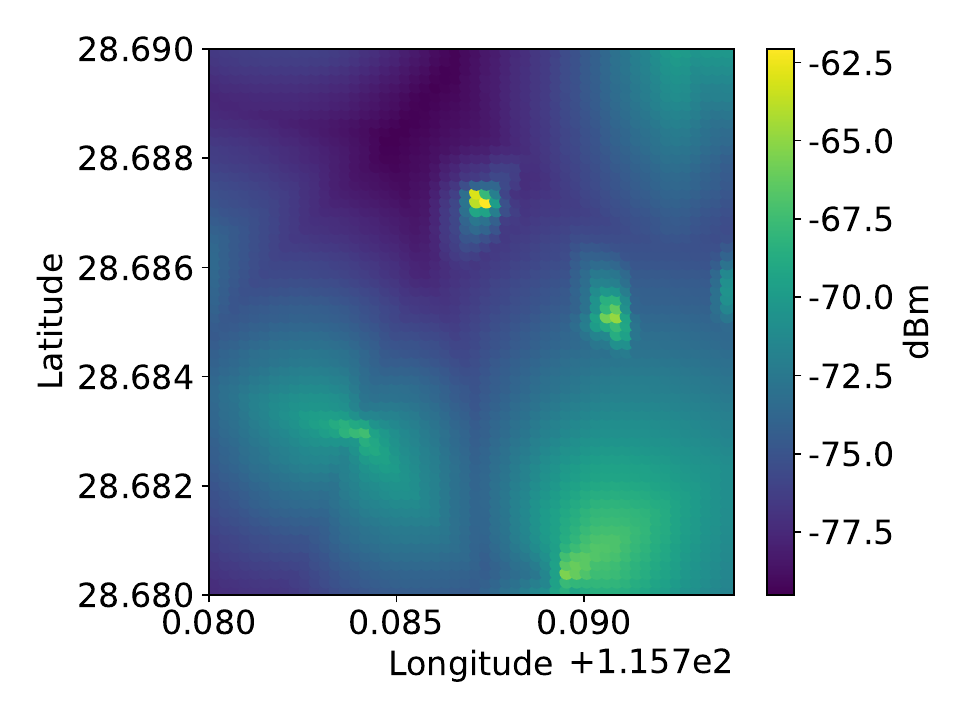}
    \caption{SS-RSRP coverage map visualisation in Xinyue Lake after operation parameters optimisation.}
    \label{Fig_Multi_BS_Coverage_2}
\end{figure}

\begin{figure}[t]
    \centering
    \includegraphics[width=0.85\linewidth]{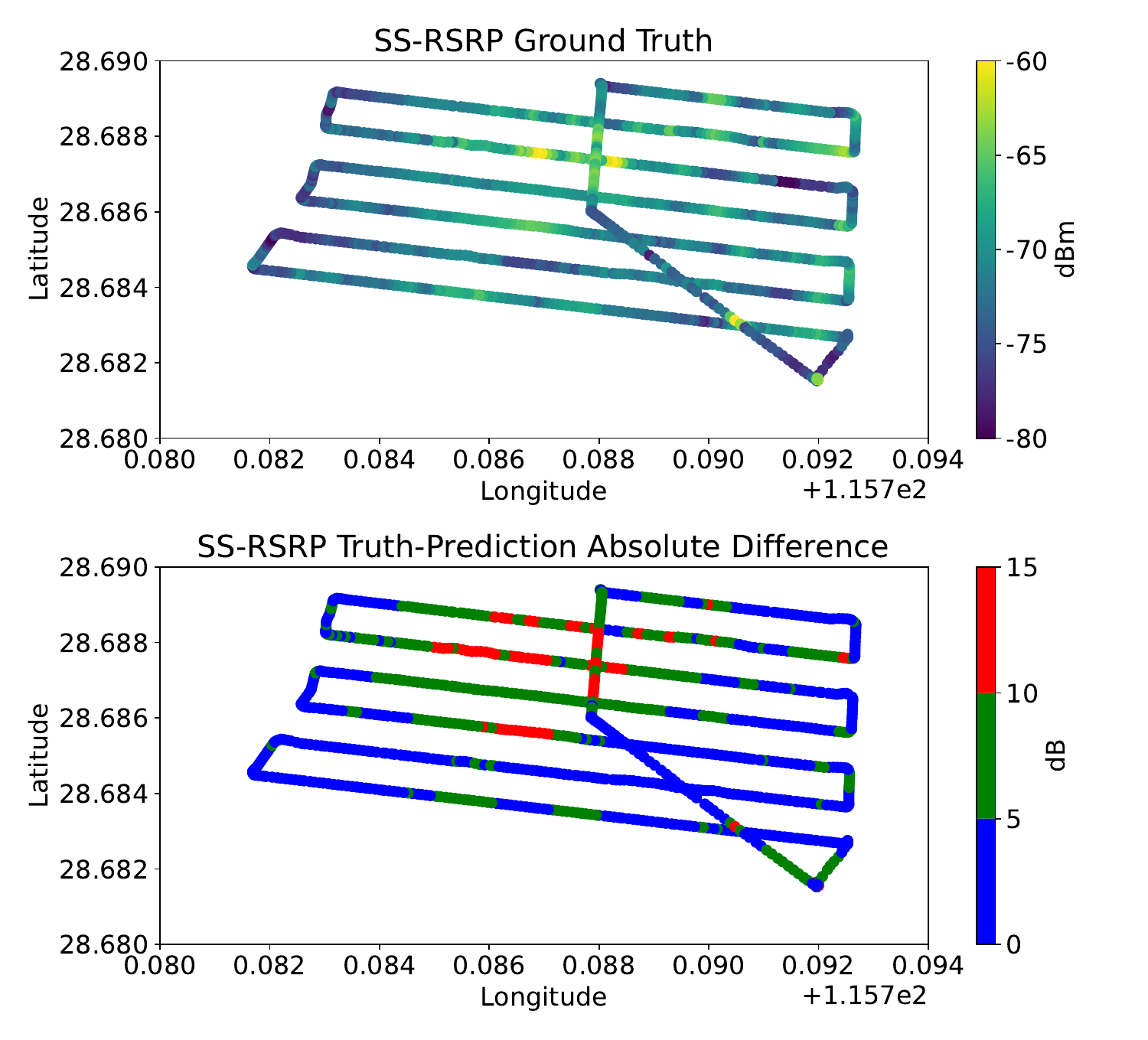}
    \caption{Comparison of predicted and actual coverage at sampling locations in Xinyue Lake after operational parameters optimisation (top: actual SS-RSRP coverage at sampling locations, bottom: absolute difference between actual and predicted SS-RSRP at sampling locations).}
    \label{Fig_Sampling_Comparison_2}
\end{figure}

\begin{figure}[t]
    \centering
    \includegraphics[width=0.8\linewidth]{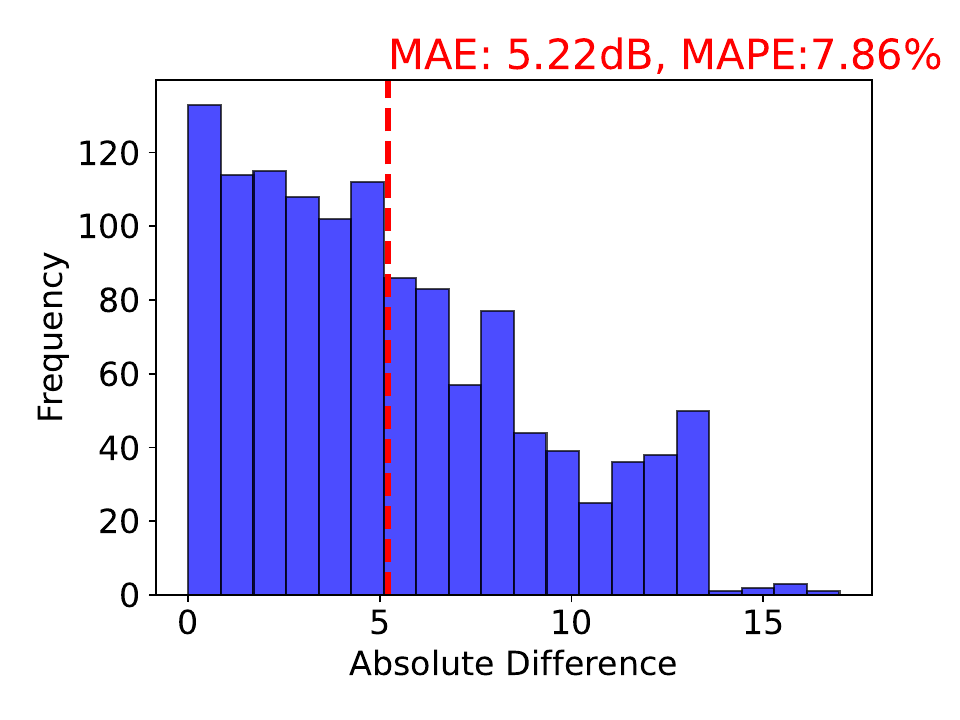}
    \caption{The distribution of truth-prediction absolute error for all samples in Xinyue Lake after operation parameters optimization.}
    \label{Fig_MAE_Distribution_2}
\end{figure}

\begin{figure}[t]
    \centering
    \includegraphics[width=0.8\linewidth]{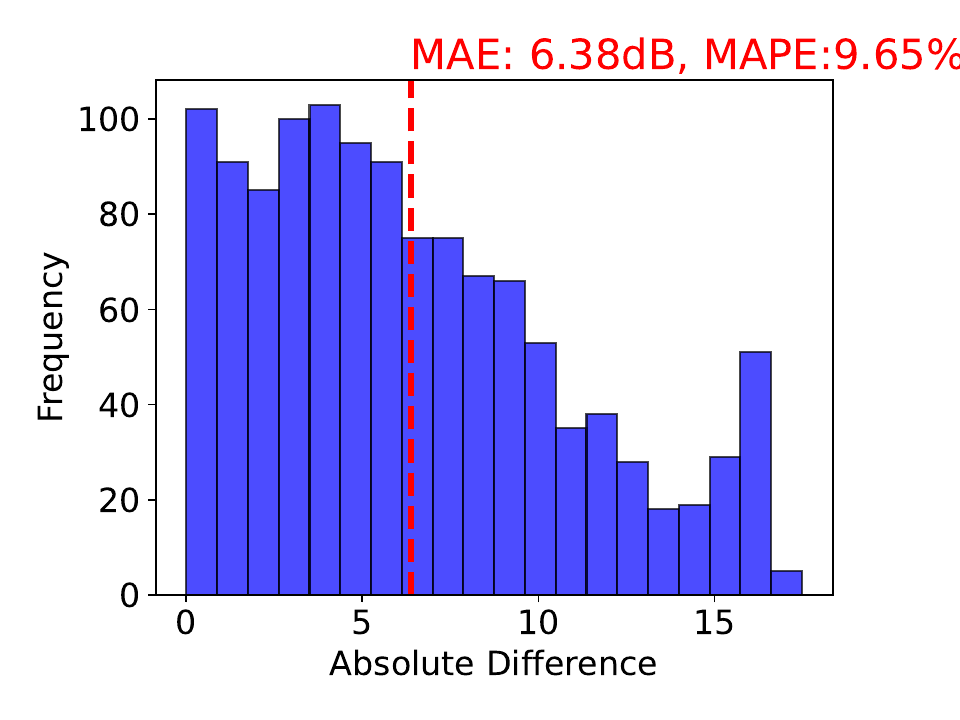}
    \caption{The distribution of truth-prediction absolute error for all samples in Xinyue Lake after operation parameters optimization (if the prediction is sampled from surface coverage in Fig. \ref{Fig_Multi_BS_Coverage}).}
    \label{Fig_MAE_Distribution_3}
\end{figure}

\subsection{Algorithm Comparison}
In addition to the ablation study, the proposed method was compared with several traditional machine learning methods, including KNN\cite{mohammadjafari2020machine} (with the number of neighbors set to 50), Lasso \cite{moreta2019prediction} (with a regularization strength of 1), Random Forest\cite{9344905} (with 100 trees), and XGBoost\cite{wang2020lte} (using mean squared error as the objective). As shown in Fig. \ref{Fig_MAE_Alogrithm_Contrast}, the MAE of different algorithms was plotted against varying BS sampling rates. The results indicate that the proposed method consistently outperforms traditional methods across various BS sampling rates, achieving an average MAE improvement of approximately $0.5\ \mathrm{dB}$. This further demonstrates the superiority of the proposed method.

{The superior performance stems from how the proposed framework addresses fundamental limitations of traditional algorithms in low-altitude scenarios. KNN's degradation with increased sampling rates occurs because it relies on Euclidean distance in a feature space with imbalanced dimensionality—high-dimensional BS parameters are sparsely sampled while low-dimensional locations vary densely. In low-altitude networks where beam patterns dominate coverage, nearby neighbours in feature space may belong to BSs with drastically different antenna configurations, causing poor predictions. Linear methods like Lasso fail because they cannot capture the non-linear interactions between distance fading, frequency-dependent attenuation, and directional antenna gain described in Eq. (\ref{Eq.9}) and Eq. (\ref{Eq_13}), particularly the angular transformations in Eq. (\ref{arctan2})-(\ref{Eq_7}). Tree-based ensembles (Random Forest, XGBoost), while non-linear, process features independently through recursive partitioning without leveraging propagation physics. 

In contrast, our disentangled representation network explicitly decomposes predictions into physically meaningful components through dedicated sub-networks for distance fading, frequency fading, and antenna gain, with outputs combined additively per Eq. (\ref{Eq_13}). The expert knowledge-based feature compression (Eq. (\ref{arctan2})-(\ref{Eq_8})) transforms imbalanced features into balanced representations, while the constrained architecture reduces overfitting risk (Propositions \ref{prop1}-\ref{prop2})—critical when individual BS data is sparse. This demonstrates the framework's superiority in exploiting domain knowledge for improved accuracy and generalisation in low-altitude scenarios.}
\subsection{Computational Complexity Analysis}

{To assess the practical applicability of our proposed method, we conduct a comprehensive computational complexity analysis from both theoretical and empirical perspectives.}

\subsubsection{Theoretical Complexity Analysis}

{We first analyze the theoretical computational advantages of our disentangled representation network compared to conventional MLP architectures. Let $V_{\text{net}}$ and $D_{\text{net}}$ denote the hidden layer width and depth, respectively.}

{For our disentangled representation network, the parameter count in the hidden layer is
\begin{equation}
N_{\text{proposed}} = D_{\text{net}} \times \left[\left(\frac{V_{\text{net}}}{3}\right)^2+\frac{V_{\text{net}}}{3}\right]\times 3=\frac{V_{\text{net}}^2D_{\text{net}}}{3}+V_{\text{net}}D_{\text{net}}.    
\end{equation}}
{In contrast, a conventional MLP with equivalent representation capacity requires
\begin{equation}
N_{\text{MLP}} =V_{\text{net}}^2D_{\text{net}}+V_{\text{net}}D_{\text{net}}.
\end{equation}}
{This corresponds to a parameter reduction of
\begin{equation}
\frac{N_{\text{MLP}} - N_{\text{proposed}}}{N_{\text{MLP}}} = \frac{2V_{\text{net}}^2D_{\text{net}}}{3(V_{\text{net}}^2D_{\text{net}}+V_{\text{net}}D_{\text{net}})} \approx 66\%.
\end{equation}}
{This theoretical analysis demonstrates that by strategically incorporating communication domain knowledge through decoupled spatial processing, our architecture achieves substantial parameter efficiency while maintaining representational capability.}

\subsubsection{Empirical Complexity Comparison}

{To validate the practical computational advantages, we compare our method with established baseline algorithms across multiple complexity metrics. Table \ref{tab:complexity} presents comprehensive comparisons including model size, parameter count, and inference time.}

{The empirical analysis reveals that our proposed method achieves an optimal balance between model complexity and computational efficiency. While Lasso demonstrates the smallest model size due to its linear nature, it lacks the representational capacity for complex spatial relationships in LANC prediction. Conversely, RandomForest achieves competitive accuracy but at the cost of significantly larger model size and inference time. Our approach maintains moderate model size (4.31 MB) while delivering sub-second inference for large-scale predictions, making it practically viable for real-time network planning applications.}

\subsection{Applications and Visualizations}
Beyond basic experimental evaluations, which include ablation studies and algorithmic comparisons with machine learning methods, we have also conducted extensive application validations on additional real-world datasets to demonstrate the efficacy of the proposed method.
\subsubsection{Low-altitude Coverage Prediction for An Individual BS}
The effective prediction of low-altitude coverage for individual base stations constitutes a crucial foundation for network coverage. To illustrate the capability of the method in predicting the coverage of a single base station, we have visualized the coverage prediction of the proposed method under different operational parameters as shown in Table \ref{Table_V}\footnote{We only show the different operational parameters. The relative SSB power in Table \ref{Table_V} is calculated as shown in Eq. (\ref{Eq_10}). In addition, SCENARIO\_0 represents a common ground-based wide beam configuration, whereas SCENARIO\_21 is configured for aerial beamforming.}. Given the extremely limited sample size for individual base stations, it is challenging to quantitatively assess the prediction performance. Therefore, we provide a brief analysis of different coverage prediction results in conjunction with the operational parameters.

From the comparisons in Fig. \ref{fig_Single_BS}(a) with \ref{fig_Single_BS}(b) and \ref{fig_Single_BS}(c), it is evident that the predicted SS-RSRP coverage is better than the predicted 8th-SSB-RSRP coverage. The coverage at a height of 500 meters is slightly weaker than that at 150 meters. From the comparisons in Fig. \ref{fig_Single_BS}(a) with \ref{fig_Single_BS}(d), the proposed method effectively captures changes in the horizontal angles of BS parameters and demonstrates better overall low-altitude coverage performance under aerial beam configurations.
\subsubsection{Multi-BS Coverage Prediction before Operational Parameters Optimization}
To identify low-altitude coverage holes, it is often necessary to predict the joint coverage of multiple BSs within a specific area. {As illustrated in Fig. \ref{New_fig1}, the proposed method first predicts the coverage of individual BSs using their respective operational parameters.} Then, the coverage predictions of multiple BSs are combined and rasterized, with the maximum predicted RSRP in each grid selected as the predicted SS-RSRP coverage for that grid. Using the trained model, we selected an area in the Xinyue Lake region of Jiangxi Province, involving 40 BSs. For each BS, the coverage at a height of 120 meters and a radius of 2000 meters (with a resolution of 10m × 10m) was predicted and combined. To ensure no data leakage, we removed all samples in the training set with longitude in [115.78, 115.794] and latitude in [28.68, 28.69] and trained the proposed model again, as the training set contained some low-altitude UAV measurement samples from the Xinyue Lake region. The visualization results can be shown in Fig. \ref{Fig_Multi_BS_Coverage}.

To evaluate the accuracy of the predictions, additional low-altitude measurements were conducted at a height of 120 meters. The predicted results were sampled at the same locations as the test points and compared with the new measurement results. The visualization results are shown in Fig. \ref{Fig_Sampling_Comparison}.
It can be observed that the predicted SS-RSRP closely matches the measured RSRP, with the absolute difference between predicted and measured SS-RSRP values being less than $5\ \mathrm{dB}$ for most points. Based on the UAV measurement data of PCI changes for connected BSs, it is evident that larger prediction errors primarily occur at sampling points where BS handovers take place. These sudden handovers introduce prediction errors, which is reasonable. To reduce such errors, future works of coverage modelling and prediction could incorporate actual BS handover parameters and cell relationships.

To provide a more precise prediction error, we visualized the distribution of the absolute differences between predicted and measured SS-RSRP values across sampling locations in Fig. \ref{Fig_MAE_Distribution_1}. The results show over 90\% of the sampling have an error of less than $8\ \mathrm{dB}$, and the MAE across all sampling locations is 4.74 dB. This further demonstrates the accuracy and practicality of the proposed method.

\subsubsection{Multi-BS Coverage Prediction after Operational Parameters Optimization}
To enhance the quality of low-altitude coverage in the area, we optimized the operational parameters of 20 BSs in the region following the method described in \cite{zhang2023physics}. Based on the optimized parameters, the coverage at a height of 120 meters was re-predicted and re-measured\footnote{We take a new round of low-altitude trajectory collection at 120 meters.}, as shown in Fig. \ref{Fig_Multi_BS_Coverage_2} and \ref{Fig_Sampling_Comparison_2}. Moreover, the coverage prediction error distribution of all samples was visualized in Fig. \ref{Fig_MAE_Distribution_2}, demonstrating that the proposed algorithm maintained excellent MAE performance. To more effectively demonstrate the algorithm's effectiveness in adjusting network operational parameters for coverage prediction, we sampled based on the surface coverage in Fig. \ref{Fig_Multi_BS_Coverage} and calculated the MAE for the adjusted sampling points, as illustrated in Fig. \ref{Fig_MAE_Distribution_3}. A comparison between Fig. \ref{Fig_MAE_Distribution_2} and Fig. \ref{Fig_MAE_Distribution_3} reveals that the accuracy of surface coverage prediction based on adjusted operational parameters outperforms the accuracy before adjustment, showing an approximate 18\% performance improvement. This indicates the proposed algorithm's sensitivity to parameter adjustments, allowing for highly accurate coverage evaluation after network optimization.

\subsubsection{Multi-BS Coverage Prediction in Ganzhou without AAU Type of BSs}
When predicting low-altitude coverage across different operators, cities, or equipment manufacturers, the AAU type of the BSs to be predicted is often unknown during model training. To verify the performance of the proposed method in this case, we selected BSs along a low-altitude flight path in Ganzhou, Jiangxi Province, as the subjects for prediction. Given that the BSs in Ganzhou are all manufactured by ZTE, while those in the training data are from Huawei, all the AAU types of the BSs to be predicted are considered as a new category. Consequently, we excluded the AAU type feature $x_1^{(S)}$ from the input features and retrained a model using the training set. 

{The maintained prediction accuracy without AAU type information can be attributed to two key factors. First, as categorised in Table \ref{table1}, the AAU type primarily determines the availability of antenna channels, which is already explicitly provided as a separate input feature ($x_2^{(S)}$, number of channels). The coverage scenario feature ($x_3^{(S)}$) further specifies the beam antenna patterns configured for each scenario at the corresponding carrier frequency ($x_4^{(S)}$). Therefore, the critical beam characteristics, including beam count, beam width, and beamforming capabilities, are still implicitly captured through the combination of these remaining features without explicit AAU manufacturer-specific information. Second, the disentangled representation framework learns to extract antenna gain patterns through the antenna gain network, which takes $\{\Delta\theta_n^{(H)}, \Delta\theta_n^{(V)}, x_2^{(S)},x_3^{(S)}\}$ as inputs. This network can generalize across different AAU types by learning the functional relationship between relative angles and gain, conditioned on the remaining static beam characteristics. The model essentially learns that BSs with similar channel configurations and coverage scenarios exhibit similar beam patterns, regardless of the specific AAU manufacturer.}
\begin{figure}[t]
    \centering
    \includegraphics[width=0.8\linewidth]{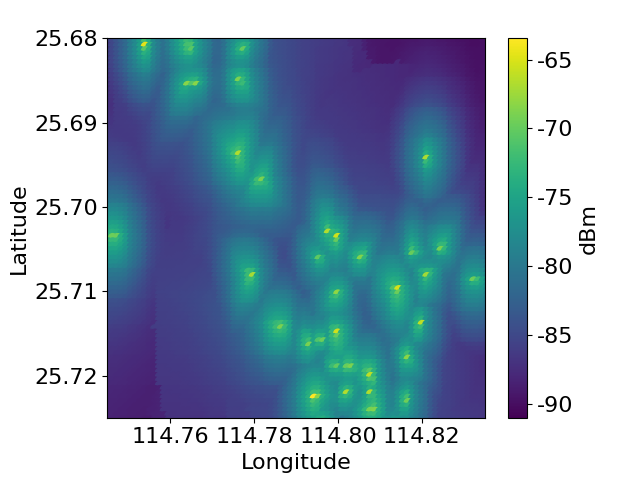}
    \caption{SS-RSRP coverage map visualization in Ganzhou.}
    \label{Fig_Multi_BS_Coverage_3}
\end{figure}

\begin{figure}[t]
    \centering
    \includegraphics[width=0.85\linewidth]{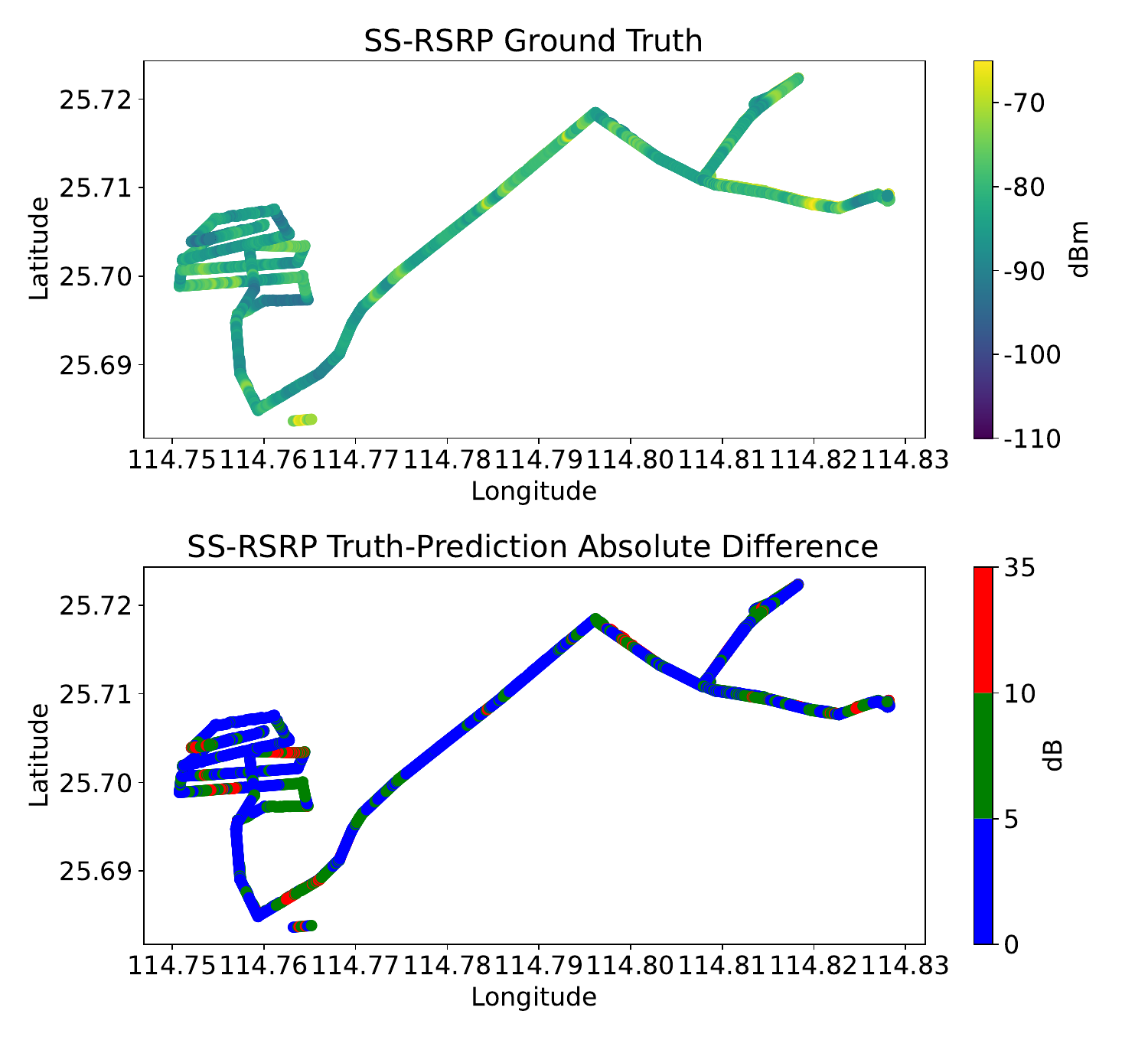}
    \caption{Comparison of predicted and actual coverage at sampling locations in Ganzhou (top: actual SS-RSRP coverage at sampling locations, bottom: the absolute difference between actual and predicted SS-RSRP at sampling locations).}
    \label{Fig_Sampling_Comparison_3}
\end{figure}

\begin{figure}[t]
    \centering
    \includegraphics[width=0.8\linewidth]{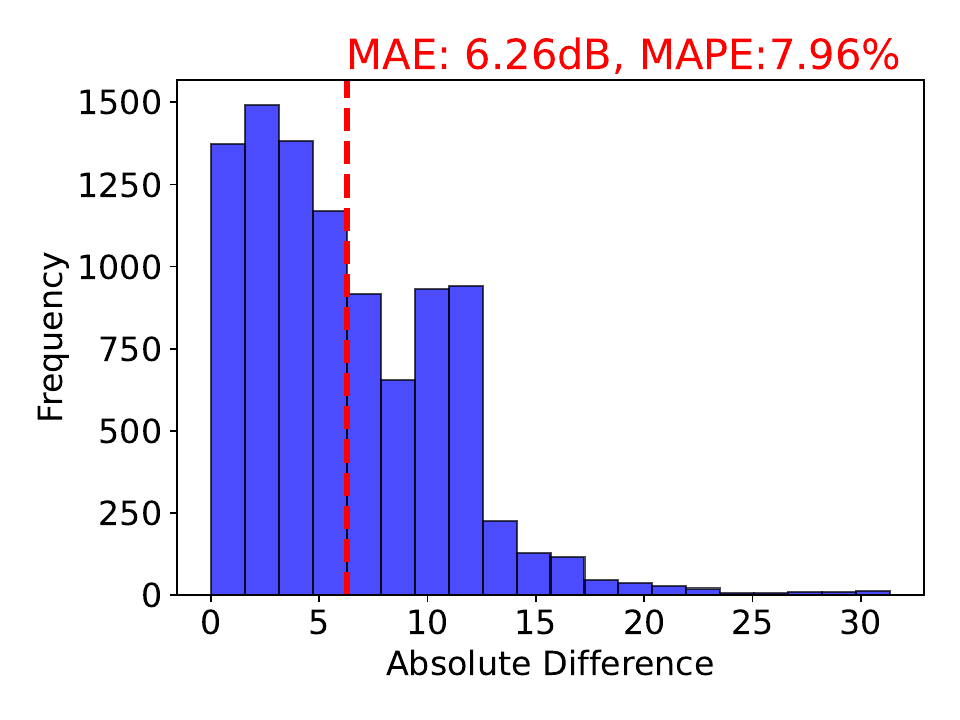}
    \caption{The distribution of truth-prediction error for all samples in Ganzhou.}
    \label{Fig_MAE_Distribution_4}
\end{figure}
By inputting the operational parameters of the BSs (a total of 60 BSs) along the flight path, we predicted the low-altitude coverage at 150 meters altitude. Specifically, we predicted the RSRP for each of the 8 beams of every BS, followed by stacking and rasterization, with the maximum predicted RSRP in a grid defined as the grid SS-RSRP. The predicted surface coverage is shown in Fig. \ref{Fig_Multi_BS_Coverage_3}. Further sampling of the surface coverage, as illustrated in Fig. \ref{Fig_Sampling_Comparison_3}, reveals that the majority of the sampling points along the flight path have a predicted error of less than $5\ \mathrm{dB}$. To quantify the error more accurately, as shown in Fig. \ref{Fig_MAE_Distribution_4}, we visualized the distribution of the absolute differences between predicted SS-RSRP and real sampled SS-RSRP. The method proposed in this study achieved an MAE of $6.26\ \mathrm{dB}$ and a MAPE of 7.96\%. {The slight performance degradation compared to the in-domain prediction (MAE of $4.74\ \mathrm{dB}$ in Fig. \ref{Fig_MAE_Distribution_1}) is expected due to potential manufacturer-specific differences in beam pattern implementations. Nevertheless, this result demonstrates the reliability of the method and its transferability across different operator scenarios, even when complete BS hardware information is unavailable.}

\section{Conclusion}\label{Section: Conclusion}
This research has made significant strides in addressing the intricate challenge of low-altitude network coverage (LANC) prediction, a cornerstone for the burgeoning aerial economy. Recognizing the impediment faced by equipment manufacturers in accessing BS antenna beam patterns, we innovatively leveraged BS operational parameters for the prediction of LANC, adopting a data-driven model that marks a substantial advancement over traditional approaches.

Our research has introduced two pioneering techniques: feature compression based on expert knowledge and a model-guided disentangled representation network. The former not only achieves data compression but also addresses the issue of feature imbalance, while the latter, by integrating a signal propagation model, mitigates reliance on extensive data and reduces the risk of overfitting. Both of the techniques have demonstrated their efficacy in our experimental validation based on real-world data, {significantly enhancing the accuracy of low-altitude coverage prediction.}

{However, we acknowledge the limitations of our study, particularly the insufficient consideration of environmental obstructions affecting signal propagation in low-altitude networks. According to related works, incorporating environmental information could potentially improve prediction accuracy by approximately 0.5-1 $\mathrm{dB}$ RMSE for radio map reconstruction problems \cite{10682510}, and by 1-2 $\mathrm{dB}$ RMSE for radio map prediction without sparse sampling \cite{9954403}, where the larger gain is attributed to the absence of measurement data. These quantitative insights highlight the potential impact of environmental factors on prediction performance. Additionally, there remains an opportunity to integrate more precise propagation models into our neural network architecture.}

{Future endeavours will focus on exploiting large-scale data and large language models to refine LANC predictions. We also aim to incorporate detailed environmental data and BS beamforming patterns into our models, thus improving the precision and resilience of LANC while addressing the identified limitations.}


%





\ifCLASSOPTIONcaptionsoff
  \newpage
\fi



%
\ifCLASSOPTIONcaptionsoff
  \newpage
\fi

\bibliographystyle{IEEEtran}

\begin{IEEEbiography}[{\includegraphics[width=1in,height=1.25in,clip,keepaspectratio]{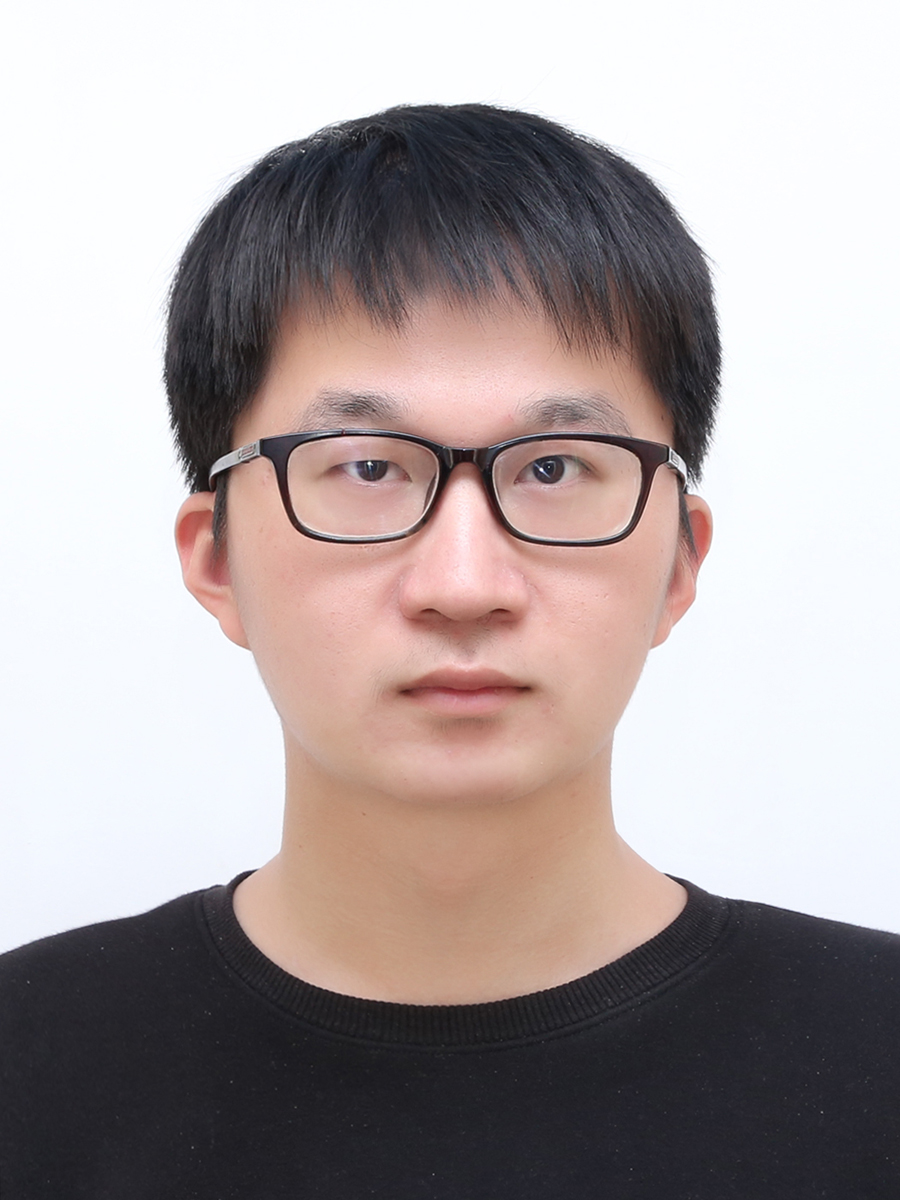}}]{Xiaojie Li} is currently pursuing his Ph.D. at Southeast University. He obtained both his B.Eng. and M.Eng. degrees from the Nanjing University of Aeronautics and Astronautics (NUAA), China, in 2022 and 2025. He was also a visiting student at the Shenzhen Research Institute of Big Data, The Chinese University of Hong Kong-Shenzhen, Guangdong, China. His research interests include spectrum sensing, network optimization, and XL-MIMO technology.
\end{IEEEbiography}

\begin{IEEEbiography}[{\includegraphics[width=1in,height=1.25in,clip,keepaspectratio]{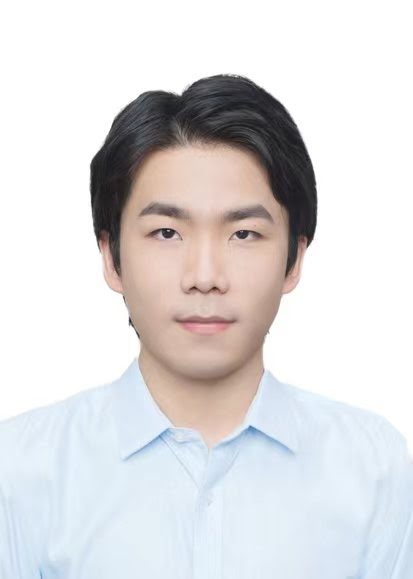}}]{Zhijie Cai} received the B. Sc. degree in Mathematics and Applied Mathematics from the Sun Yat-sen University, Guangdong, China, in 2022. He is currently pursuing his doctoral degree with the School of Science and Engineering, Chinese University of Hong Kong, Shenzhen. His research interests include edge intelligence and distributed machine learning.
\end{IEEEbiography}

\begin{IEEEbiography}[{\includegraphics[width=1in,height=1.25in,clip,keepaspectratio]{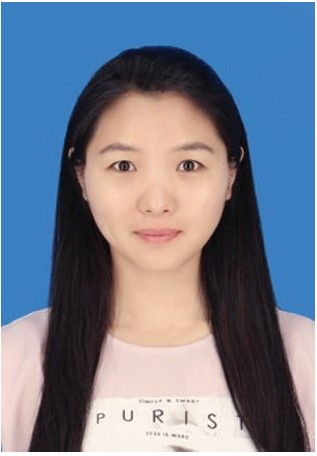}}]{Nan Qi}
(Senior Member, IEEE) received the B.Sc. and Ph.D. degrees in communications engineering from Northwestern Polytechnical University (NPU), China, in 2011 and 2017, respectively. She is currently an associate professor in the Department of Electronic Engineering, Nanjing University of Aeronautics and Astronautics, China. Her research interests include UAV-assisted communications, optimization of wireless communications, opportunistic spectrum access, learning theory, and game theory.
\end{IEEEbiography}

\begin{IEEEbiography}[{\includegraphics[width=1in,height=1.25in,clip,keepaspectratio]{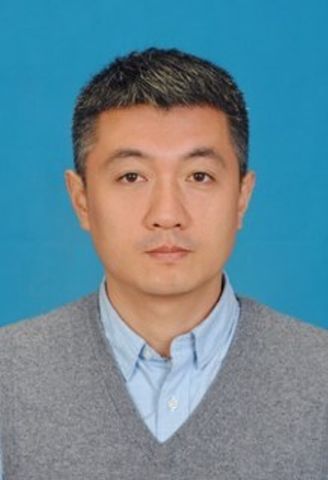}}]{Chao Dong} (Senior Member, IEEE) received the Ph.D. degree in communication engineering from PLA University of Science and Technology, Nanjing, China, in 2007. From 2008 to 2011, he was a
 Postdoctoral Fellow with the Department of Computer Science and Technology, Nanjing University, Nanjing. From 2011 to 2017, he was an Associate Professor with the Institute of Communications Engineering, PLA University of Science and Technol
ogy. He is currently a Full Professor with the College of Electronic and Information Engineering, Nanjing University of Aeronautics and Astronautics, Nanjing. His research interests include D2D communications, UAV swarm networking, and anti-jamming network protocols—e-multimedia communications and wireless communications.
\end{IEEEbiography}

\begin{IEEEbiography}[{\includegraphics[width=1in,height=1.25in,clip,keepaspectratio]{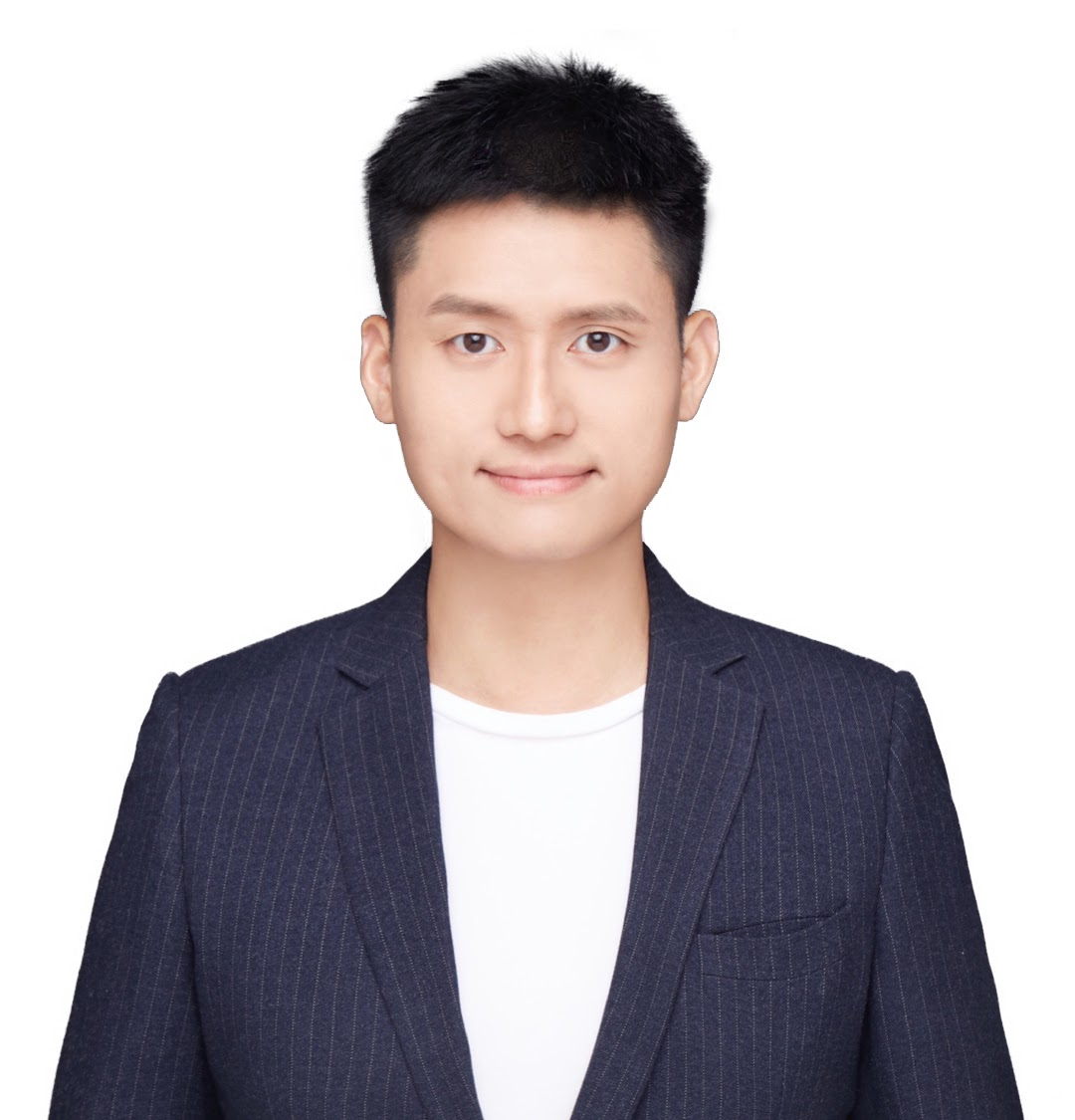}}]{Guangxu Zhu}
(Member, IEEE) received the Ph.D. degree in electrical and electronic engineering from The University of Hong Kong in 2019. Currently he is a senior research scientist and deputy director of network system optimization center at the Shenzhen research institute of big data, and an adjunct associate professor with the Chinese University of Hong Kong, Shenzhen. His recent research interests include edge intelligence, semantic communications, and integrated sensing and communication. He is a recipient of the 2023 IEEE ComSoc Asia-Pacific Best Young Researcher Award and Outstanding Paper Award, the World's Top 2\% Scientists by Stanford University, the "AI 2000 Most Influential Scholar Award Honorable Mention", the Young Scientist Award from UCOM 2023, the Best Paper Award from WCSP 2013 and IEEE  JSnC 2024. He serves as associate editors at top-tier journals in IEEE, including IEEE TMC, TWC and WCL. He is the vice co-chair of the IEEE ComSoc Asia-Pacific Board Young Professionals Committee.
\end{IEEEbiography}

\begin{IEEEbiography}[{\includegraphics[width=1in,height=1.25in,clip,keepaspectratio]{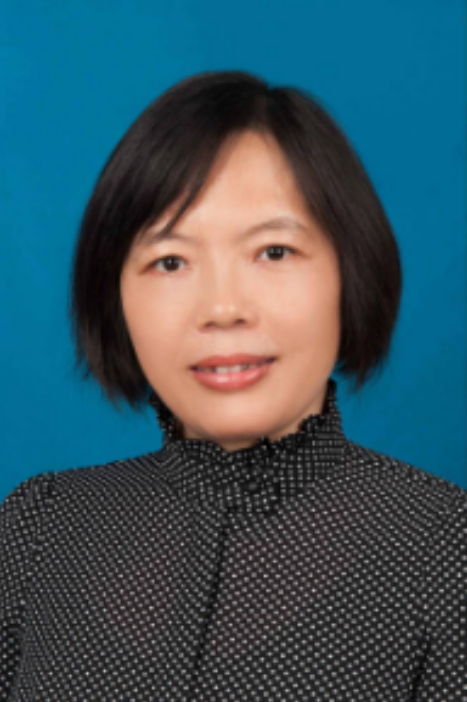}}]{Haixia Ma} received her Master's degree in Radio Physics from Shanghai University, China, in 2002, and her Doctoral degree from the Shanghai Institute of Optics and Fine Mechanics, Chinese Academy of Sciences, China, in 2005. She is currently an Associate Professor at the Department of Physics, Nanjing University of Aeronautics and Astronautics, China. Her research interests include metasurfaces and optical system design.
\end{IEEEbiography}

\begin{IEEEbiography}[{\includegraphics[width=1in,height=1.25in,clip,keepaspectratio]{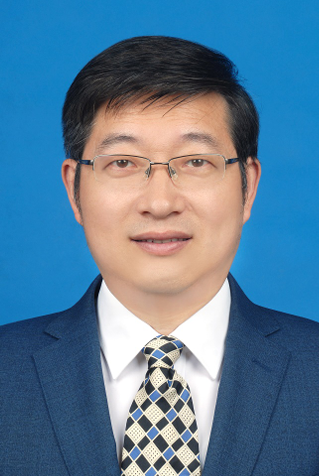}}]{Qihui Wu}  (Fellow, IEEE) received the B.Sc. degree in communications engineering and the M.Sc. and Ph.D. degrees in communications and information systems from the Institute of Communications Engineering, Nanjing, China, in 1994, 1997, and 2000, respectively. From 2003 to 2005, he was a Postdoctoral Research Associate with Southeast University, Nanjing. From 2005 to 2007, he was an Associate Professor
 with the Institute of Communications Engineering, PLA University of Science and Technology, Nanjing, where he is currently a Full Professor. From March 2011 to September 2011, he was an advanced Visiting Scholar with the Stevens Institute of Technology, Hoboken, NJ, USA. Since 2016, he has been with Nanjing University of Aeronautics and Astronautics, Nanjing, and appointed as a Distinguished Professor. His research interests include wireless communications
 and statistical signal processing, with emphasis on system design of software-defined radio, cognitive radio, and smart radio.
\end{IEEEbiography}

\begin{IEEEbiography}[{\includegraphics[width=1in,height=1.25in,clip,keepaspectratio]{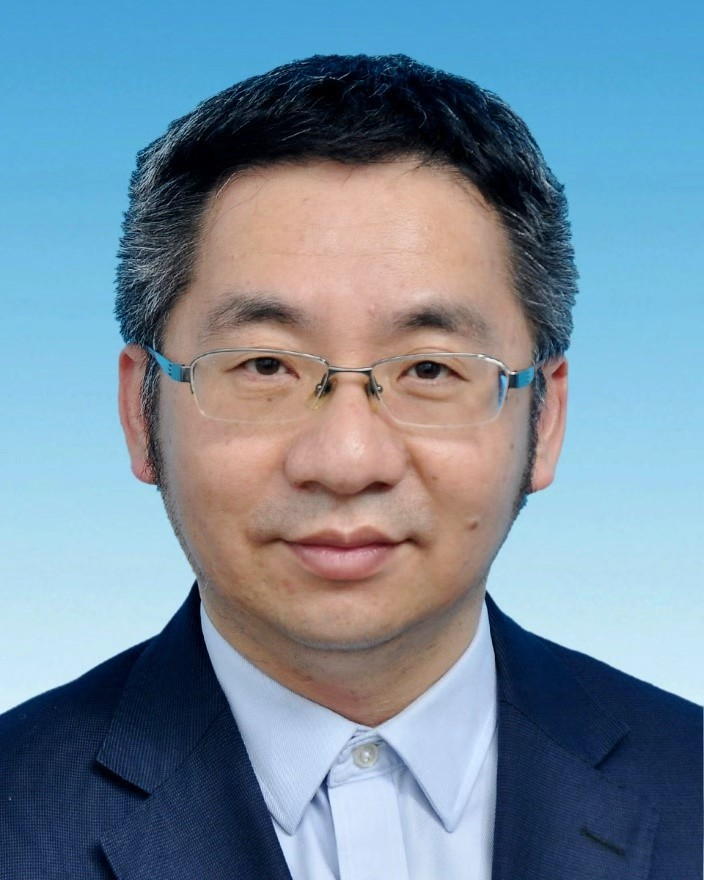}}]{Shi Jin} (Fellow, IEEE) received the B.S. degree in communications engineering from Guilin University of Electronic Technology, Guilin, China, in 1996, the M.S. degree from Nanjing University of Posts and Telecommunications, Nanjing, China, in 2003, and the Ph.D. degree in information and communications engineering from the Southeast University, Nanjing, in 2007. From June 2007 to October 2009, he was a Research Fellow with the Adastral Park Research Campus, University College London, London, U.K. He is currently with the faculty of the National Mobile Communications Research Laboratory, Southeast University. His research interests include wireless communications, random matrix theory, and information theory. He is serving as an Area Editor for the Transactions on Communications and IET Electronics Letters. He was an Associate Editor for the IEEE Transactions on Wireless Communications, and IEEE Communications Letters, and IET Communications. Dr. Jin and his coauthors have been awarded the 2011 IEEE Communications Society Stephen O. Rice Prize Paper Award in the field of communication theory, the IEEE Vehicular Technology Society 2023 Jack Neubauer Memorial Award, a 2022 Best Paper Award and a 2010 Young Author Best Paper Award by the IEEE Signal Processing Society.
\end{IEEEbiography}

\vspace{11pt}

\vfill




%







\end{document}